\documentclass[authoryear]{elsarticle}
\setlength{\textwidth}{6.4in} \textheight=9.8in \oddsidemargin 0in \topmargin -1 in

\usepackage{amsmath,amssymb,graphicx,amsthm}
\usepackage{natbib}
\usepackage{color}
\usepackage{soul}

\newcommand{\D}{{\mathrm{d}}}
\newtheorem{theorem}{Theorem}

\newtheorem{proposition}{Proposition}
\newtheorem{definition}{Definition}
\newtheorem{corollary}{Corollary}
\newtheorem{remark}{Remark}

\journal{Chemical Engineering Science}

\begin{document}

\begin{frontmatter}

\title{Extended Detailed Balance for Systems \\  with Irreversible Reactions}
\author{A.~N.~Gorban}
 \ead{ag153@le.ac.uk}
\address{Department of Mathematics, University of Leicester, Leicester, LE1 7RH, UK}
\author{G.~S.~Yablonsky}
\address{Parks College, Department of Chemistry, Saint Louis
University, Saint Louis, MO 63103, USA}

%\date{}

%\maketitle

\begin{abstract}
The principle of detailed balance states that in equilibrium each
elementary process is equilibrated by its reverse process. For many
real physico-chemical complex systems (e.g. homogeneous combustion,
heterogeneous catalytic oxidation, most enzyme reactions etc),
detailed mechanisms include both reversible and irreversible
reactions. In this case, the principle of detailed balance cannot be
applied directly. We represent irreversible reactions as limits of
reversible steps and obtain the principle of detailed balance for
complex mechanisms with some irreversible elementary processes. We
prove two consequences of the detailed balance for these mechanisms:
the structural condition and the algebraic condition that form
together the {\em extended form of detailed balance}. The {\em
algebraic condition} is the principle of detailed balance for the
reversible part. The {\em structural condition} is: the convex hull
of the stoichiometric vectors of the irreversible reactions has
empty intersection with the linear span of the stoichiometric
vectors of the reversible reactions. Physically, this means that the
irreversible reactions cannot be included in oriented cyclic
pathways.

The systems with the extended form of detailed balance are also
the limits of the reversible systems with detailed balance when
some of the equilibrium concentrations (or activities) tend to
zero. Surprisingly, the structure of the limit reaction
mechanism crucially depends on the relative speeds of this
tendency to zero.
\end{abstract}
\begin{keyword}
 reaction network \sep detailed balance \sep microreversibility  \sep
 pathway \sep irreversibility \sep kinetics
 \PACS
82.40.Qt \sep  82.20.-w \sep 82.60.Hc \sep 87.15.R-
\end{keyword}

\end{frontmatter}

\section{Introduction\label{sec1}}

\subsection{Detailed Balance for Systems
with Irreversible Reactions: the Grin of the Vanishing Cat}

The principle of detailed balance was explicitly introduced and
effectively used for collisions by \cite{Boltzmann}. In 1872,
he proved his $H$-theorem using this principle. In its general
form, this principle is formulated for kinetic systems which
are decomposed into elementary processes (collisions, or steps,
or elementary reactions). At equilibrium, each elementary
process should be equilibrated by its reverse process. The
arguments in favor of this property are founded upon
microscopic reversibility. The microscopic ``reversing of time"
turns at the kinetic level into the ``reversing of arrows": the
elementary processes transform into their reverse processes.
For example, the reaction $\sum_i \alpha_i A_i \to \sum_j
\beta_j B_j$ transforms into $\sum_j \beta_j B_j \to \sum_i
\alpha_i A_i$ and conversely. The equilibrium ensemble should
be invariant with respect to this transformation because of
microreversibility and the uniqueness of thermodynamic
equilibrium. This leads us immediately to the concept of
detailed balance: each process is equilibrated by its reverse
process.

For a given equilibrium, the principle of detailed balance
results in a system of {\em linear} conditions on kinetic
constants (or collision kernels). On the contrary, if we
postulate just the {\em existence} of an a priori unknown
equilibrium state with the detailed balance property then a
system of {\em nonlinear} conditions on kinetic constants
appear. These conditions were introduced in by
\cite{Wegscheider1901} and used later by \cite{Onsager1_2}.
They are known now as the {\em Wegscheider conditions}.

For linear kinetics, the Wegscheider conditions have a very
simple and transparent form: {\em for each oriented cycle of
elementary processes the product of kinetic constants is equal
to the product of kinetic constants of the reverse processes.}

However, many mechanisms of complex chemical and biochemical
reactions, in particular mechanisms of combustion and enzyme
reaction, include some irreversible (unidirectional) reactions.
In many cases, complex mechanisms consist of some reversible
and some irreversible reactions, equilibrium concentrations and
rates of reactions become zeroes, and the standard forms of the
detailed balance do not have a sense.

{In physical chemistry, the feasibility of a reaction depends on the
energies and entropies of system states, initial, final, and
transition ones. Nevertheless, some combinations of irreversible
reactions are impossible irrespective of the values of thermodynamic
functions. Since Wegscheider's time it is known that the cyclic
sequence of irreversible reactions (the completely irreversible
cycle) is impossible. It is forbidden by the principle of detailed
balance. In a similar way, the reaction mechanism} $A
\rightleftharpoons B$, $A \to C$, $C \to B$ is forbidden as well as
$A \rightleftharpoons B$, $A \rightleftharpoons  C$, $ C \to B$.

Two fundamental problems can be posed:

(1)   Which mechanisms with irreversible steps are allowed, and
which such mechanisms are forbidden by the principle of
detailed balance?

In accordance with our knowledge, this question was not
answered rigorously and the general problem was not solved.
Beside that, the procedure of determining the forbidden
mechanisms was not described.

(2) Let a mechanism with some irreversible steps be not forbidden.
Do we still have some relationships between kinetic constants of
this mechanism?

In our paper, both problems are analyzed based on the same
procedure. Substituting the zero kinetic constants by small, however
not zero values we return to the fully 'reversible case', in which
all steps of the reaction mechanism are reversible. Then, we analyze
a limit case, in which small kinetic parameters tend to reach 0.

Such an idea was applied previously to several examples. In
particular, \cite{Chu1971} used this idea for a three-step
mechanism, demonstrating that the mechanism $A \rightleftharpoons
B$, $A \to C$, $B \to C$ can appear as a limit of reversible
mechanisms which obey the principle of detailed balance, whereas the
system $A \rightleftharpoons B$, $A \to C$, $C \to B$ cannot appear
in such a limit. However, this approach was not applied to the
general analysis of multi-step mechanisms, only to a few systems of
low dimensions.

Since Lewis Carroll's ``Alice's Adventures in Wonderland", the
Cheshire Cat is well known, in particular its inscrutable grin.
Finally this cat disappears gradually until nothing is left but
its grin. Alice makes a remark she has often seen a cat without
a grin but never a grin without a cat.

The detailed balance for systems with irreversible reactions
can be compared with this grin of the Cheshire cat: the whole
cat (the reversible system with detailed balance) vanishes but
the grin persists.

\subsection{Detailed Balance: the Classical Relations
\label{sec:Detbal}}

First, let us consider linear systems and write the general
first order kinetic equations:
\begin{equation}\label{Master1}
\dot{p}_i=\sum_j (k_{ij}p_j-k_{ji}p_i) \, .
\end{equation}
Here, $p_i$ is the probability of a state $A_i$
($i=1,\ldots,n$) (or, for monomolecular reactions, the
concentration of a reagent $A_i$). The kinetic constant $k_{ij}
\geq 0$ ($i \neq j$) is the intensity of the transitions
$A_j\to A_i$ (i.e., $k_{ij}$ is $k_{i \leftarrow j}$). The rate
of the elementary process $A_i \to A_j$ is $k_{ji} p_i$. The
class of equations (\ref{Master1}) includes the Kolmogorov
equation for finite Markov chains, the Master equation in
physical kinetics and the chemical kinetics equations for
monomolecular reactions.

Let $p_i^{\rm eq}>0$ be a positive equilibrium distribution.
According to  the principle of detailed balance, the rate of
the elementary process $A_i \to A_j$ at equilibrium coincides
with the rate of the reverse process $A_i \leftarrow A_j$:
\begin{equation}\label{DetBalMastLin}
k_{ij}p_j^{\rm eq}=k_{ji}p_i^{\rm eq} \, .
\end{equation}
For a given equilibrium, $p_i^{\rm eq}$, the principle of
detailed balance is equivalent to this system
(\ref{DetBalMastLin}) of linear equalities. To find the
conditions of the {\em existence} of such a positive
equilibrium that (\ref{DetBalMastLin}) holds, it is sufficient
to write equations (\ref{DetBalMastLin}) in the logarithmic
form, $\ln p_i^{\rm eq} - \ln p_j^{\rm eq} = \ln k_{ij}-\ln
k_{ji}$, to consider this system as a system of linear
equations with respect to the unknown $\ln p_i^{\rm eq}$, and
to formulate the standard solvability condition.

After some elementary transformation this condition gives: a
positive equilibrium with detailed balance (\ref{DetBalMastLin})
exists if and only if
\begin{enumerate}
\item{If $k_{ij}>0$ then $k_{ji}>0$ (reversibility);}
\item{For each oriented cycle of elementary processes,
    $A_{i_1}\to A_{i_2} \to \ldots A_{i_q}\to A_{i_1}$, the
    product of the kinetic constants is equal to the
    product of the kinetic constants of the reverse processes:
\begin{equation}\label{WegscheiderCondition}
\prod_{j=1}^q k_{i_{j+1}i_{j}}= \prod_{j=1}^q k_{i_{j}i_{j+1}}
\end{equation}
where the cyclic numeration is used, $i_{q+1}=i_1$.}
\end{enumerate}
Of course, it is sufficient to use in
(\ref{WegscheiderCondition}) a basis of independent cycles
(see, for example the review of \cite{Schnakenberg1976}).

Let us introduce the more general Wegscheider conditions for
nonlinear kinetics and the generalized mass action law. (For a
more detailed exposition we refer to the textbook of
\cite{Yablonskiiatal1991}.) The elementary reactions are given
by the stoichiometric equations
\begin{equation}\label{ReactMech}
\sum_i \alpha_{ri} A_i \to \sum_j \beta_{rj} A_j \;\; (r=1, \ldots, m) \, ,
\end{equation}
where $A_i$ are the components and $\alpha_{ri}\geq 0$,
$\beta_{rj}\geq 0$ are the stoichiometric coefficients. The reverse
reactions with positive constants are included in the list
(\ref{ReactMech}) separately. We need this separation of direct and
reverse reactions to apply later the general formalism to the
systems with some irreversible reactions.

The {\em stoichiometric matrix} is
$\boldsymbol{\Gamma}=(\gamma_{ri})$,
$\gamma_{ri}=\beta_{ri}-\alpha_{ri}$ (gain minus loss). The
{\em stoichiometric vector} $\gamma_r$ is the $r$th row of
$\boldsymbol{\Gamma}$ with coordinates
$\gamma_{ri}=\beta_{ri}-\alpha_{ri}$.

According to the {\em generalized mass action law}, the
reaction rate for an elementary reaction (\ref{ReactMech}) is
\begin{equation}\label{GenMAL}
w_r=k_r \prod_{i=1}^n a_i^{\alpha_{ri}} \, ,
\end{equation}
where $a_i\geq 0$ is the {\em activity} of $A_i$.

The list (\ref{ReactMech}) includes reactions with the reaction
rate constants $k_r>0$. For each $r$ we define $k_r^+=k_r$,
$w_r^+=w_r$, $k_r^-$ is the reaction rate constant for the
reverse reaction if it is on the list (\ref{ReactMech}) and 0
if it is not, $w_r^-$ is the reaction rate for the reverse
reaction if it is on the list (\ref{ReactMech}) and 0 if it is
not. For a reversible reaction, $K_r=k_r^+/k_r^-$

The principle of detailed balance for the generalized mass
action law is: For given values $k_r$ there exists a positive
equilibrium $a_i^{\rm eq}>0$ with detailed balance,
$w_r^+=w_r^-$. This means that the system of linear equations
\begin{equation}\label{DetBalGen}
\sum_i \gamma_{ri} x_i = \ln k_r^+-\ln k_r^-=\ln K_r
\end{equation}
is solvable ($x_i=\ln a_i^{\rm eq}$). The following classical
result gives the necessary and sufficient conditions for the
existence of the positive equilibrium $a_i^{\rm eq}>0$ with
detailed balance (see, for example, the textbook of
\cite{Yablonskiiatal1991}).

\begin{proposition}
Two conditions are sufficient and necessary for solvability of
(\ref{DetBalGen}):
\begin{enumerate}
\item{If $k_r^+>0$ then $k_r^->0$ (reversibility);}
\item {For any solution $ \boldsymbol{\lambda}=
    (\lambda_r)$ of the system
\begin{equation}\label{lambdaGamma}
\boldsymbol{\lambda \Gamma} =0  \;\; \left(\mbox{i.e.}\;\; \sum_r \lambda_r \gamma_{ri}=0\;\; \mbox{for all} \;\; i\right)
\end{equation}
the Wegscheider identity holds:
\begin{equation}\label{WegscheiderLambda}
\prod_{r=1}^m     (k_r^+)^{\lambda_r}=\prod_{r=1}^m     (k_r^-)^{\lambda_r} \, .
\end{equation}}
\end{enumerate}
\end{proposition}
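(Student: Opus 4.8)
The plan is to reduce the existence of a positive detailed-balance equilibrium to the solvability of a single real linear system and then invoke the Fredholm alternative. First I would dispose of condition~1 directly. If $k_r^+>0$ but $k_r^-=0$ for some $r$, then at any positive activity vector $a^{\rm eq}>0$ one has $w_r^+=k_r^+\prod_i (a_i^{\rm eq})^{\alpha_{ri}}>0$ while $w_r^-=0$, so the equality $w_r^+=w_r^-$ cannot hold. Hence reversibility is necessary for a positive equilibrium with detailed balance to exist at all, and it is precisely what makes every ratio $K_r=k_r^+/k_r^-$, and therefore every right-hand side $\ln K_r$, well defined.

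Assuming condition~1, I would rewrite the detailed-balance equalities in logarithmic form. At a positive point, $w_r^+=w_r^-$ reads $k_r^+\prod_i (a_i^{\rm eq})^{\alpha_{ri}}=k_r^-\prod_i (a_i^{\rm eq})^{\beta_{ri}}$, i.e. $K_r=\prod_i (a_i^{\rm eq})^{\gamma_{ri}}$ with $\gamma_{ri}=\beta_{ri}-\alpha_{ri}$. Taking logarithms and substituting $x_i=\ln a_i^{\rm eq}$ turns this into exactly the system (\ref{DetBalGen}). Conversely, any real solution $x$ yields a positive equilibrium via $a_i^{\rm eq}=\exp x_i$. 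Thus the existence of a positive detailed-balance equilibrium is equivalent to the solvability of (\ref{DetBalGen}) over $\mathbb{R}^n$, the positivity constraint disappearing because the exponential maps $\mathbb{R}$ onto $(0,\infty)$.

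The core step is the Fredholm alternative. Writing $b_r=\ln K_r$ and viewing $\boldsymbol{\Gamma}=(\gamma_{ri})$ as a linear map, the system $\sum_i \gamma_{ri}x_i=b_r$ is solvable in $x$ if and only if $b$ is orthogonal to the kernel of the transposed map, that is, $\sum_r \lambda_r b_r=0$ for every $\boldsymbol{\lambda}$ with $\boldsymbol{\lambda\Gamma}=0$. Spelling out $\sum_r\lambda_r b_r=\sum_r\lambda_r(\ln k_r^+-\ln k_r^-)=0$ and exponentiating gives precisely the Wegscheider identity (\ref{WegscheiderLambda}). Hence condition~2 is exactly the solvability criterion, and combining it with the necessity of condition~1 establishes the equivalence in both directions.

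I do not expect a genuine obstacle: the statement is a clean consequence of elementary linear algebra once the problem is linearised by passing to $x_i=\ln a_i^{\rm eq}$. The only points requiring care are (i) verifying that reversibility is genuinely forced, rather than merely convenient, so that condition~1 is a true necessary condition and not an artefact of the parametrisation, and (ii) confirming that it suffices to test the orthogonality condition on a finite basis of the solution space of $\boldsymbol{\lambda\Gamma}=0$, so that the infinitely many identities asserted in condition~2 reduce to finitely many independent ones.
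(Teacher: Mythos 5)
Your proof is correct and follows exactly the route the paper indicates (the paper cites this as a classical result and only sketches the idea): pass to logarithms $x_i=\ln a_i^{\rm eq}$, observe that reversibility is forced so that $\ln K_r$ is defined, and apply the standard solvability criterion (Fredholm alternative) to $\boldsymbol{\Gamma}x=\ln K$, which is precisely the Wegscheider identities for $\boldsymbol{\lambda}\in\ker\boldsymbol{\Gamma}^{\mathrm{T}}$. The reduction to a finite basis of solutions of $\boldsymbol{\lambda\Gamma}=0$, which you flag as a point of care, is exactly the content of the paper's Remark~1.
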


\begin{remark}It is sufficient to use in (\ref{WegscheiderLambda}) a
basis of solutions of the system (\ref{lambdaGamma}):
$\boldsymbol{\lambda} \in \{\boldsymbol{\lambda}^1, \cdots ,
\boldsymbol{\lambda}^g\}$.
\end{remark}

\begin{remark}
The Wegscheider condition for the linear systems
(\ref{WegscheiderCondition}) is a particular case of the
general Wegscheider identity (\ref{WegscheiderLambda}).
Therefore, the solutions $ \boldsymbol{\lambda}$ of equation
(\ref{lambdaGamma}) are generalizations of the (non-oriented)
cycles in the reaction networks. The basis of solutions
corresponds to the basic cycles. This basis is, obviously, not
unique.
\end{remark}

\begin{remark}
In equation (\ref{DetBalGen}) unknown $x_i=\ln a_i$ are independent
variables and vector $\boldsymbol{x}$ can take any value in $R^n$.
In practice, this is not always true. For example, for heterogeneous
systems with solid components some activities may vary in a narrow
interval or may be even constant (see the more detailed discussion
below in Section \ref{Sec:Methane}). We do not study multiphase
equilibiria in our paper.
\end{remark}

\begin{remark}
All the closed chemical systems have linear conservation laws:
conservation of mass, various sorts of atoms, electric charge and
other conserved quantities. They are linear functions of the amounts
$N_i$ of chemical components $A_i$.  There is a problem of
uniqueness and existence of a positive equilibrium with detailed
balance or without it for every set of values of the independent
conservation laws. To solve this problem we need some properties of
the connection between activities and concentrations,
$(a_i)\leftrightarrow(c_i)$. We do not assume any hypothesis about
this connection and study just existence of a positive equilibrium
with detailed balance in the space of activities. The Wegscheider
identity (\ref{WegscheiderLambda}) gives a necessary and sufficient
condition for this existence.
\end{remark}

In practice, very often $k_r^-=0$ for some $r$, whereas
$k_r^+>0$. In these cases, the standard forms of the detailed
balance have no sense. Indeed, let us consider a linear
reversible cycle with an irreversible buffer:
$$A_1 \rightleftharpoons A_2 \rightleftharpoons \ldots A_n
\rightleftharpoons A_1 \to A_0 \, .$$ This system converges to
the state where only $p_0>$ and $p_i=0$ for $i>0$. In this
state, trivially, $w_r^+=w_r^-=0$ and it seems that the
standard principle of detailed balance does not imply any
restriction on the kinetic constants. Of course, this
impression is wrong.

Let us consider this system as a limit of the system with a
reversible buffer, $A_1 \rightleftharpoons A_0 $ (both reaction
rate constants are positive), when the constant of the reverse
reaction is positive but tends to zero: $k_{1 \leftarrow 0} \to
0$, $k_{1 \leftarrow 0}> 0$. For each positive value $k_{1
\leftarrow 0}> 0$ the  condition of detailed balance
$w_r^+=w_r^-$ gives the Wegscheider identity
(\ref{WegscheiderCondition}) for the cycle $A_1
\rightleftharpoons A_2 \rightleftharpoons \ldots A_n
\rightleftharpoons A_1$: The product of direct reaction rate
constants is equal to the product of the reverse reaction rate
constants. This condition holds also in the limit $k_{1
\leftarrow 0} \to 0$. So, any practically negligible but
positive value of the reverse kinetic constant implies the
nontrivial Wegscheider condition for the other constants.

If we assume that the negligible values of the constants should
not affect the kinetic systems then this Wegscheider condition
should hold for the system with fully irreversible steps as
well. Therefore, the following way for the formalization of the
principle of detailed balance for irreversible reactions is
proposed. We return to reversible reactions in which the
principle of detailed balance is assumed by the introduction of
small $k_r^->0$. Then we go to the limit $k_r^-\to 0$
($k_r^->0$) for these reactions.

{It is worth mentioning that the free energy has no limit when
some of the reaction equilibrium constants tend to zero. For
example, for the ideal gases $F=\sum_i N_i (RT\ln c_i +
\mu_i^0-RT)$, where $c_i$ is the concentration, $N_i$ is the
amount and $\mu_i^0$ is the standard chemical potential of the
component $A_i$. In the irreversible limit some $\mu_i^0 \to
\infty$. On the contrary, the activities}
\begin{equation}\label{StandardActivity}
a_i=\exp\left(\frac{\mu_i-\mu_i^0}{RT}\right)\,
\end{equation}
{remain finite (for the ideal gases, for example, $a_i=c_i$) and the
approach based on the generalized mass action law and the equations
$w_r^+=w_r^-$ can be applied in the irreversible limit.}

Below, we study systems with  irreversible reactions as the
limits of the systems with reversible reactions and detailed
balance, when some reaction rate constants go to zero. We
formulate the restrictions on the constants in this limit and
find the finite number of conditions that is necessary and
sufficient to check. First of all we have to discuss the
necessary notion of cycles for general reaction networks.

\subsection{Main Results}

We develop three approaches to the detailed balance conditions for
the systems with some irreversible reactions. The first and the most
physical idea is to consider an irreversible reaction as a limit of
a reversible reaction when the reaction rate constant for a reverse
reaction tends to zero. The limits of systems of reversible
reactions with detailed balance conditions cannot be arbitrary
systems with some irreversible reactions and we study the structural
and algebraic restrictions for these systems.

The second approach is based on the technical idea to study the
limits of the Wegscheider identities (\ref{WegscheiderLambda}).
Here, it is very useful to apply the concept of the general
(nonlinear) irreversible cycles or pathways developed recently far
enough for our purposes by
\cite{SchusterNature2000,ElemNodesBMS2004} and other. Let us write
all reactions separately (including direct and reverse reactions)
(\ref{ReactMech}). The general oriented cycle or pathway is a
relation between vectors $\gamma_r$ with non-negative coefficients :
$\sum_r \lambda_r \gamma_r=0$, $\lambda_r \geq 0$ and $\sum_r
\lambda_r >0$. For each system with all reversible reactions and
detailed balance the Wegscheider identity (\ref{WegscheiderLambda})
holds for any oriented cycle. Therefore, if an oriented cycle
persists in the limit with some irreversible reactions, then, for
$\lambda_r>0$, the $r$th reaction should remain reversible and for
this cycle the Wegscheider condition persists.

This property motivates the definition of the extended (or
weakened, \cite{YabGor2010}) form of detailed balance in
Section~\ref{Sec:DefWDB} through the general oriented cycles
and the Wegscheider conditions. Theorem~\ref{Theo:Equival}
states that a system satisfies the extended form of detailed
balance if and only if it is a limit of systems with all
reversible reactions and detailed balance. One part of this
theorem (``if") is proved immediately in
Section~\ref{Sec:DefWDB}, the proof of the second part (``only
if") exploits the third approach and is postponed till
Section~\ref{Sec:Shifted}.

The third idea is to study the limits when some equilibrium
concentrations (or, more general, activities) tend to zero. For
systems with all reversible reactions, we can explicitly express the
constants of the reverse reactions through the constants of the
direct reactions and the equilibrium activities: just use the
detailed balance conditions, $w^+(a^{\rm eq})=w^-(a^{\rm eq})$.
Here, instead of 2$m$ parameters, $k^{\pm}_r$ ($m$ is the number of
reactions) we use $m+n$ parameters: $m$ reaction rate constants
$k^+_r$ and $n$ equilibrium activities $a_i^{\rm eq}$. In this
description of the reversible reactions, the principle of detailed
balance is trivially satisfied. Some reactions become irreversible
in the limits when some of the equilibrium activities tends to zero.
Surprisingly, the structure of the limit reaction mechanism
crucially depends on the relative speeds of this tendency to zero.

In Section~\ref{Sec:Shifted},  we assume that $a_i^{\rm
eq}={\rm const_i} \times \varepsilon^{\delta_i}$ and study the
limits $\varepsilon \to 0$. The $n$-dimensional space of
exponents $\delta=(\delta_i)$ is split by $m$ hyperplanes
$(\gamma_r, \delta)=0$ on convex cones. Each of these cones is
given by a set of inequalities $(\gamma_r, \delta)\lesseqqgtr
0$ ($r=1,\ldots ,m$). In every such a cone, the limit reaction
mechanism for $\varepsilon \to 0$ is constant.

Using this approach, we prove the second part of
Theorem~\ref{Theo:Equival} and  even more: if a system satisfies the
extended form of detailed balance then it may be obtained in the
limit $\varepsilon \to 0$ from a system with all reversible
reactions with given $k_r^+$ and $a_i^{\rm eq}={\rm const_i} \times
\varepsilon^{\delta_i}$  for some exponents ${\delta_i}$
(Theorem~\ref{Theo:Exponents+Weak}). So, all the three approaches to
the consequences of the principle of detailed balance for the
systems with some irreversible reactions are equivalent.

The computational problem associated with the extended form of
detailed balance is nontrivial. For example, the oriented
cycles (pathways) form a convex polyhedral cone and we have to
formulate the structural condition of the extended form of
detailed balance for all extreme rays (extreme pathways) of
this cone (Theorem~\ref{Theo:2}): if  $\lambda_r>0$ for a
vector $ \boldsymbol{\lambda}$ from an extreme ray then the
$r$th reaction should remain reversible. Calculation of all
these extreme rays is a well known and computational expensive
problem \citep{FukudaProdon1996,Papin2003,ElemNodesBMS2004}. In
Theorem~\ref{Theo:ConstrCrit}, we significantly reduce the
dimension of the problem.

Instead of the set of {\em all} stoichiometric vectors $\gamma_r$
($r=1,\ldots, m$) in the whole space of composition $\mathbb{R}^n$
($n$ is the number of components, $m$ is the number of reactions) it
is sufficient to consider the set of the stoichiometric vectors of
the {\em irreversible} reactions in the quotient space
$\mathbb{R}^n/S$, where $S$ is spanned by the stoichiometric vectors
of all {\em reversible} reactions. The simple exclusion of the
linear conservation laws provides additional dimensionality
reduction.  The application of reduction methods is demonstrated in
the case study in Section~\ref{Sec:Methane}.

In Section~\ref{Sec:linear}, we formulate the main results for
the simple case of linear (monomolecular) systems.
Sections~\ref{Sec:NonlinExamples} and~\ref{Sec:Methane} are
devoted to examples of nonlinear systems. In
Section~\ref{Sec:NonlinExamples}, the simple examples with
obvious lists of the extreme pathways are collected. In
Section~\ref{Sec:Methane}, we analyze the possible irreversible
limits for a complex reaction of methane reforming with CO$_2$.

\section{Cycles and Pathways in General Reaction Networks}

There exist several graph representations of general reaction
networks \citep{Yablonskiiatal1991,Temkin1996} and each of them
implies the correspondent notion of a cycle. For example, each input
and output formal sum in the reaction mechanism (\ref{ReactMech})
can be considered as a vertex (a complex) and then a reaction with
the positive rate constant is an oriented edge. This graph of the
transformation of complexes is convenient for the analysis of the
complex balance condition \citep{Feinberg1972}.

The bipartite graphs of reactions \citep{VolpertKhudyaev1985}
gives us another example: it includes two types of vertices:
components (correspond to $A_i$) and reactions (correspond to
elementary reactions from (\ref{ReactMech})). There is an edge
from the $i$th component to the $s$th reaction if
$\alpha_{ri}>0$ and from the $s$th reaction to the $i$th
component if $\beta_{ri}>0$. The correspondent stoichiometric
coefficients are the weights of the edges. This graph is
convenient for the analysis of the system stability, for
calculation of Jacobians for the right hand sides of the
kinetic equations and for analysis of their signs
\citep{Ivanova1979,MinchevaRoussel2007}. For nonlinear systems,
these graphs do not give a simple representation of the
detailed balance conditions.

We need a special notion of a cycle which corresponds to the
algebraic relations between reactions. Let us recall that we
include direct and inverse reactions in the reaction mechanism
(\ref{ReactMech}) separately. Each solution of
(\ref{lambdaGamma}) may be represented as follows:
\begin{equation}\label{LambdaMech}
\begin{split}
+&\left|
\underline{\begin{array}{l}
\lambda_1 \times\left(\sum_i \alpha_{1i} A_i \to \sum_j \beta_{1j} A_j \right) \\
\lambda_2 \times\left(\sum_i \alpha_{2i} A_i \to \sum_j \beta_{2j} A_j \right) \\
\cdots            \\
\lambda_m \times\left(\sum_i \alpha_{mi} A_i \to \sum_j \beta_{mj} A_j \right)
\end{array}} \right. \\
& \;\;\;\;\;\;\; \;\;\;\;= \; \sum_i a_{i} A_i \to \sum_j a_{j} A_j \; .
\end{split}
\end{equation}
Here, $a_i=\sum_s \lambda_s \alpha_{si} \equiv \sum_s \lambda_s
\beta_{si}$. Therefore, we need the following definition of a
cycle.

\begin{definition}\label{Def:OrientedCycles}An oriented cycle is a vector of coefficients $\boldsymbol{\lambda}
\neq 0$ with all $\lambda_i \geq 0$ that satisfies
(\ref{LambdaMech}).
\end{definition}

\begin{remark}
Cycles in catalysis and, especially, in biochemistry  are
called pathways \citep{SchusterNature2000,Papin2003}. An
oriented pathway is an oriented cycle from
Definition~\ref{Def:OrientedCycles}. An extreme (oriented)
pathway is a direction vector of an extreme ray of the cone
$\Lambda_+$. A solution of equation (\ref{lambdaGamma}) (a
non-oriented cycle) is a non-oriented pathway.
\end{remark}

Qualitatively these concepts have been introduced in the early 1940s
by Horiuti who applied them to heterogeneous catalytic reactions
\citep{Horiuti1973}. Horiuti used them to eliminate intermediates of
the complex catalytic reaction by adding the steps of the detailed
mechanism first multiplied by special coefficients. As result of
such procedure, the chemical equation with no intermediates is
obtained.

All oriented cycles form the cone $\Lambda_+$ (without the
origin). {\em Extreme ray} of a convex cone is a face that is,
at the same time, a ray.  Each ray may be defined by a
directional vector $\boldsymbol{\lambda}$ that is an arbitrary
nonzero vector from this ray. The cone $\Lambda_+$ is defined
by a finite system of linear equations (\ref{lambdaGamma}) and
inequalities $\lambda_r \geq 0$. Therefore, it has a finite set
of extreme rays.

For integer stoichiometric coefficients, $\alpha_{si}$,
$\beta_{si}$, any extreme ray is defined by an uniform linear
systems of equations with integer coefficients supplemented by the
conditions $\lambda_i \geq 0$ and $\boldsymbol{\lambda} \neq 0$.
Therefore, we can always select a direction vector with the integer
coefficients. For each extreme ray, there exists a unique direction
vector with minimal integer coefficients.

For monomolecular reaction networks, these cycles coincide with
the oriented cycles in the graph of reactions (where vertices
are reagents and edges are reactions).

There exists an oriented cycle of the length two for each pair of
mutually reverse reactions. For these cycles the Wegscheider
identities (\ref{WegscheiderLambda}) hold trivially, for any
positive values of $k^{\pm}$.

\begin{remark}\label{Rem:Acycl}
The systems without oriented cycles ($\Lambda_+=\{0\}$) have a
simple dynamic behavior. First of all, for such a system the
convex hull of the stoichiometric vectors does not include
zero: $0 \notin {\rm conv}\{\gamma_1,\ldots,\gamma_m\}$.
Therefore, there exists a linear functional $l$ that separates
0 from $\{\gamma_1,\ldots,\gamma_m\}$: $l(\gamma_s) >0$ for all
$s=1, \ldots, m$. This linear function $l(c)$ increases
monotonically due to any kinetic equation $$\frac{\D c}{\D
t}=\sum_s w_s \gamma_s$$ with reaction rates $w_s \geq 0$: $\D
l(c)/\D t >0$ if at least one reaction rate $w_s>0$.
\end{remark}

\section{Extended Form of Detailed Balance}

\subsection{Definition \label{Sec:DefWDB}}

A practically important reaction  mechanism may include
reversible and irreversible steps. However, some mechanisms
with irreversible steps may be wrong because they cannot appear
as the limits of reversible mechanisms with detailed balance.
Therefore, the first question is about the mechanism structure:
what is allowed?

The second question is about the constants: let the mechanism
not be forbidden. If it is the limit of a system with detailed
balance then the reaction rate constants may be connected by
additional algebraic conditions like the Wegscheider conditions
(\ref{WegscheiderCondition}). We should describe all the
necessary conditions. In this Section we answer both questions
and formulate both conditions, structural and algebraic.

We have to study study the identities (\ref{WegscheiderLambda})
in the limit when some $k_r^-\to 0$. First of all, let us
consider reversible reactions: if $k_r^+ > 0$ then $k_r^- > 0$.
It is sufficient to use in (\ref{WegscheiderLambda}) only
$\boldsymbol{\lambda}$ with nonnegative coordinates, $\lambda_r
\geq 0$. Indeed, the direct and reverse reactions are included
in the list (\ref{ReactMech}) under different numbers. Assume
that $\lambda_r <0$ in an identity (\ref{WegscheiderLambda})
for some $r$. Let the reverse reaction for this $r$ have number
$r'$. Let us substitute $(k^+_r)^{\lambda_r}$ in  the left hand
side of (\ref{WegscheiderLambda}) by $(k^+_{r'})^{-\lambda_r}$
and $(k^-_r)^{\lambda_r}$ in  the right hand side by
$(k^-_{r'})^{-\lambda_r}$. The new condition is equivalent to
the previous one. Let us iterate this operation for various
$r$. In the finite number of steps all the powers $\lambda_r
\geq 0$.

Let us use notation $\Lambda$  for the linear space of
solutions of (\ref{lambdaGamma}) and $\Lambda_+$ for the cone
of positive solutions $\boldsymbol{\lambda}$ ($\lambda_r \geq 0
$) of (\ref{lambdaGamma}).

For reversible reactions, we proved the following proposition.
Let the reactions are reversible and the direct and reverse
reactions are included in the list (\ref{ReactMech})
separately.

\begin{proposition}\label{Prop:1}The Wegscheider identity  (\ref{WegscheiderLambda})
holds for all $\boldsymbol{\lambda}\in \Lambda$ if and only if
it holds for all positive $\boldsymbol{\lambda}\in \Lambda_+$.
\end{proposition}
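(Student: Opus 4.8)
The plan is to treat the two implications separately. Since $\Lambda_+\subseteq\Lambda$, the implication ``$\Lambda\Rightarrow\Lambda_+$'' is immediate: if (\ref{WegscheiderLambda}) holds for every $\boldsymbol{\lambda}\in\Lambda$ it holds \emph{a fortiori} on the subcone $\Lambda_+$. All the content is in the converse, and I would obtain it by the net-flux construction described below, which I expect to be cleaner than the component-by-component sign-flipping sketched before the statement.

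The starting point is that, all reactions being reversible, the list (\ref{ReactMech}) splits into pairs $(r,r')$ of mutually reverse reactions with $\gamma_{r'}=-\gamma_r$, and with rate constants linked by $k_{r'}^+=k_r^-$ and $k_{r'}^-=k_r^+$ (a direct consequence of the convention that reverse reactions are listed separately). Substituting these relations into (\ref{WegscheiderLambda}) and grouping factors pairwise, the contribution of a pair to the left- and right-hand sides becomes $(k_r^+)^{\lambda_r}(k_r^-)^{\lambda_{r'}}$ and $(k_r^-)^{\lambda_r}(k_r^+)^{\lambda_{r'}}$ respectively; dividing, the entire identity collapses to $\prod_{\text{pairs}}(k_r^+)^{\nu_{rr'}}=\prod_{\text{pairs}}(k_r^-)^{\nu_{rr'}}$ with $\nu_{rr'}=\lambda_r-\lambda_{r'}$. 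Thus (\ref{WegscheiderLambda}) depends on $\boldsymbol{\lambda}$ only through the net coefficients $\nu_{rr'}$ of the reversible pairs. This short exponent manipulation is the only real content of the argument.

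Given an arbitrary $\boldsymbol{\lambda}\in\Lambda$, I would then define $\boldsymbol{\mu}$ with the same net coefficients but nonnegative entries: for each pair set $\mu_r=\max(\nu_{rr'},0)$ and $\mu_{r'}=\max(-\nu_{rr'},0)$. By construction $\mu_s\geq 0$, and $\mu_r-\mu_{r'}=\nu_{rr'}=\lambda_r-\lambda_{r'}$, so the per-pair contribution $\mu_r\gamma_r+\mu_{r'}\gamma_{r'}=\nu_{rr'}\gamma_r$ matches that of $\boldsymbol{\lambda}$; summing over all pairs gives $\boldsymbol{\mu\Gamma}=\boldsymbol{\lambda\Gamma}=0$, hence $\boldsymbol{\mu}\in\Lambda_+$. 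Since $\boldsymbol{\lambda}$ and $\boldsymbol{\mu}$ have identical net coefficients, the reduction of the previous paragraph makes (\ref{WegscheiderLambda}) the very same equation for both vectors. It holds for $\boldsymbol{\mu}\in\Lambda_+$ by hypothesis, and therefore it holds for $\boldsymbol{\lambda}$, which proves the converse.

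I do not anticipate a serious obstacle: the construction produces $\boldsymbol{\mu}\in\Lambda_+$ in a single pass, so---unlike the iterative sign-flipping, which would need a monotonicity or descent argument to guarantee termination---no termination reasoning is required, and the pairwise relations $k_{r'}^{\pm}=k_r^{\mp}$ supply the invariance of (\ref{WegscheiderLambda}) automatically. The only point demanding a little care is the bookkeeping of exponents when many pairs are handled at once, which the net-coefficient formulation manages uniformly by fixing one representative per reversible pair.
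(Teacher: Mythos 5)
Your proof is correct and follows essentially the same route as the paper: both arguments exploit the pairing of each reaction with its separately listed reverse (so that $k_{r'}^{\pm}=k_r^{\mp}$ and $\gamma_{r'}=-\gamma_r$) to shift negative exponents onto the reverse reaction without changing either $\boldsymbol{\lambda\Gamma}$ or the identity (\ref{WegscheiderLambda}). Your single-pass net-coefficient construction $\mu_r=\max(\nu_{rr'},0)$, $\mu_{r'}=\max(-\nu_{rr'},0)$ is just the closed form of the fixed point that the paper's iterative sign-flipping reaches after at most two flips per pair, so the difference is purely presentational.
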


Elementary linear algebra gives the following corollary for
reversible reactions.

\begin{corollary}\label{Corr:1}The solution of the system of linear equations for
logarithms of equilibrium activities (\ref{DetBalGen}) exists
if and only if for any {\em positive} solution
$\boldsymbol{\lambda}$ ($\lambda_r \geq 0 $) of the system
$\boldsymbol{\lambda} \boldsymbol{\Gamma}=0$
(\ref{lambdaGamma}) the condition (\ref{WegscheiderLambda})
holds.
\end{corollary}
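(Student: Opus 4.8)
The plan is to prove Corollary \ref{Corr:1} by combining Proposition \ref{Prop:1} with the standard Fredholm-type solvability criterion (the finite-dimensional Farkas/duality alternative) for the linear system (\ref{DetBalGen}). The system (\ref{DetBalGen}) reads $\boldsymbol{\Gamma}^{T} \boldsymbol{x} = \boldsymbol{b}$ with right-hand side $b_r = \ln K_r = \ln k_r^{+} - \ln k_r^{-}$. Classical linear algebra tells us that such a system is solvable in $\boldsymbol{x}$ if and only if $\boldsymbol{b}$ is orthogonal to the cokernel, i.e. $\boldsymbol{\lambda}\cdot\boldsymbol{b}=0$ for every $\boldsymbol{\lambda}$ in the left null space of $\boldsymbol{\Gamma}^{T}$, which is exactly the solution space $\Lambda$ of (\ref{lambdaGamma}). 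Writing out $\sum_r \lambda_r (\ln k_r^{+} - \ln k_r^{-}) = 0$ and exponentiating gives precisely the Wegscheider identity (\ref{WegscheiderLambda}) for that $\boldsymbol{\lambda}$. Thus solvability of (\ref{DetBalGen}) is equivalent to (\ref{WegscheiderLambda}) holding for all $\boldsymbol{\lambda}\in\Lambda$.

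The steps I would carry out are as follows. First I would recall the solvability criterion: for a real linear system $\boldsymbol{\Gamma}^{T}\boldsymbol{x}=\boldsymbol{b}$, a solution exists if and only if $\boldsymbol{b}\perp\ker(\boldsymbol{\Gamma})$ in the appropriate pairing, noting that the row space of $\boldsymbol{\Gamma}$ (spanned by the $\gamma_r$) and the orthogonal complement $\Lambda$ of the left null space coincide by the rank--nullity relationship applied to $\boldsymbol{\Gamma}$. Second, I would translate the orthogonality condition $\sum_r \lambda_r b_r = 0$ back into multiplicative form by exponentiation, obtaining (\ref{WegscheiderLambda}) over all $\boldsymbol{\lambda}\in\Lambda$. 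Third, and this is the crucial reduction, I would invoke Proposition \ref{Prop:1} to replace the quantifier ``for all $\boldsymbol{\lambda}\in\Lambda$'' by the restricted quantifier ``for all positive $\boldsymbol{\lambda}\in\Lambda_{+}$'', which is exactly the statement of the corollary. Because Proposition \ref{Prop:1} already establishes this equivalence in the reversible setting (where $k_r^{+}>0 \Rightarrow k_r^{-}>0$, so all the logarithms and exponentials involved are well defined), the reduction is immediate.

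I do not expect a genuine obstacle here, since the corollary is described in the excerpt as following from ``elementary linear algebra''; the real content has already been packaged into Proposition \ref{Prop:1}. The only point requiring a little care is ensuring that the passage between the additive solvability condition $\sum_r \lambda_r \ln K_r = 0$ and the multiplicative identity (\ref{WegscheiderLambda}) is fully reversible. This is the step where one must check that $\boldsymbol{\lambda}$ is permitted to range over a spanning set (or a basis, as the first Remark following the Proposition notes) rather than literally every vector, and that the signs of the $\lambda_r$ do not cause trouble when exponentiating. For reversible reactions all $k_r^{\pm}$ are strictly positive, so no degeneracy arises and the logarithm is a bijection onto $\mathbb{R}$, making the two formulations literally equivalent. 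Hence the chain ``(\ref{DetBalGen}) solvable'' $\Leftrightarrow$ ``(\ref{WegscheiderLambda}) for all $\boldsymbol{\lambda}\in\Lambda$'' $\Leftrightarrow$ ``(\ref{WegscheiderLambda}) for all $\boldsymbol{\lambda}\in\Lambda_{+}$'' completes the proof.
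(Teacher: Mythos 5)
Your argument is correct and follows the same route the paper intends: the paper derives the corollary by combining the classical solvability criterion of the first Proposition (solvability of (\ref{DetBalGen}) is equivalent to the Wegscheider identity (\ref{WegscheiderLambda}) holding for all $\boldsymbol{\lambda}\in\Lambda$, which is exactly the Fredholm/orthogonality argument you spell out) with Proposition~\ref{Prop:1} to restrict the quantifier to $\Lambda_+$. One small wording slip: $\Lambda$ \emph{is} the left null space of $\boldsymbol{\Gamma}$ (a subspace of $\mathbb{R}^m$), and it is the \emph{image} of $\boldsymbol{x}\mapsto\boldsymbol{\Gamma}\boldsymbol{x}$ that coincides with the orthogonal complement of $\Lambda$, not the row space spanned by the $\gamma_r$, which lives in $\mathbb{R}^n$; this does not affect the validity of your chain of equivalences.
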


Let us study  identity (\ref{WegscheiderLambda}) for a positive
$\boldsymbol{\lambda}$ when some of $k_r \to 0$. In this limit,
for every $\boldsymbol{\lambda} \in \Lambda_+$
Corollary~\ref{Corr:1} gives two conditions:

\begin{corollary}
Let $k_s>0$, $k_s \to k_s^{\rm lim}$ and the Wegscheider
identity (\ref{WegscheiderLambda}) holds for $k_s$. Then
\begin{enumerate}
\item{If  $\lambda_s >0$ and $k_s^+\to 0$ for some $s$ then for some $q$
    $\lambda_q  >0$ and $k_q^- \to 0$;}
\item{If for all positive components $\lambda_s >0$ the limit
    constants are positive, $k_s^{\rm lim\, \pm}>0$, then
    the condition (\ref{WegscheiderLambda}) holds for
    $k_s^{\rm lim\, \pm}$.}
\end{enumerate}
\end{corollary}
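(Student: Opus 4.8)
The plan is to prove both assertions by passing to the limit directly in the Wegscheider identity (\ref{WegscheiderLambda}) for a single fixed positive cycle $\boldsymbol{\lambda}\in\Lambda_+$; by Proposition~\ref{Prop:1} and Corollary~\ref{Corr:1} it suffices to control these identities on $\Lambda_+$, which is exactly the setting of the hypothesis. First I would discard from (\ref{WegscheiderLambda}) every factor with $\lambda_r=0$, since each such factor equals $1$; this restricts the two products to the support $\{r:\lambda_r>0\}$ and avoids ever raising $0$ to the power $0$. Writing $P^+=\prod_{r:\lambda_r>0}(k_r^+)^{\lambda_r}$ and $P^-=\prod_{r:\lambda_r>0}(k_r^-)^{\lambda_r}$, the hypothesis reduces to $P^+=P^-$ for every member of the family. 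Since the limits $k_r^{\rm lim,\pm}$ are finite and nonnegative, both $P^+$ and $P^-$ are continuous functions of finitely many bounded quantities, so both converge, with $P^{\pm}\to\prod_{r:\lambda_r>0}(k_r^{\rm lim,\pm})^{\lambda_r}$, and the two limits are equal.

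For the first assertion I would observe that the factor $(k_s^+)^{\lambda_s}$, with $\lambda_s>0$ and $k_s^+\to 0$, drives $P^+\to 0$, because every remaining factor tends to a finite nonnegative limit. Hence $P^-\to 0$ as well. A finite product of finite nonnegative numbers vanishes only when one of its factors vanishes, so there must exist $q$ with $\lambda_q>0$ and $(k_q^-)^{\lambda_q}\to 0$, i.e.\ $k_q^-\to 0$. This is precisely the structural, reversibility-type conclusion: an irreversible limit in the support of $\boldsymbol{\lambda}$ forces a matching reverse constant in that support to vanish too.

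For the second assertion the argument is pure continuity. If $k_r^{\rm lim,\pm}>0$ for every $r$ in the support of $\boldsymbol{\lambda}$, then both limiting products are strictly positive and finite, and the already-established equality of the limits of $P^+$ and $P^-$ is exactly identity (\ref{WegscheiderLambda}) written for the limit constants $k_s^{\rm lim,\pm}$. Thus the Wegscheider condition persists in the limit for this cycle.

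The hard part will not be any deep estimate, since there is none here, but rather the bookkeeping that keeps the limiting procedure legitimate. The two points that need care are: restricting to the support of $\boldsymbol{\lambda}$ \emph{before} taking any limit, so that no indeterminate $0^0$ ever appears; and using the finiteness of the limits $k_r^{\rm lim,\pm}$, which is what makes ``the product tends to zero'' equivalent to ``some factor tends to zero'' for the nonnegative quantities involved. Once these are in place, both conclusions fall out of the single equality $P^+=P^-$ and its limit, and they combine into the structural and algebraic halves of the extended form of detailed balance.
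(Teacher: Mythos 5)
Your proof is correct and follows essentially the route the paper takes: the paper treats this corollary as an immediate consequence of passing to the limit in the Wegscheider identity restricted to the support of $\boldsymbol{\lambda}$, and the same ``one factor tends to zero while the rest stay bounded, so the other side must also vanish in some factor'' argument appears explicitly in the paper's proof of the direct part of Theorem~\ref{Theo:Equival}. Your care about restricting to $\{r:\lambda_r>0\}$ before taking limits is sound bookkeeping but does not change the substance of the argument.
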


We can interpret the positive solutions of (\ref{lambdaGamma})
as oriented cycles (linear or nonlinear). The first condition
means that if a cycle is cut by the limit $k_s^+\to 0$ in one
direction then it should be also cut by a limit $k_q^-\to 0$ in
the opposite direction: the irreversible cycle is forbidden.
This remark leads to  the definition of the {\em structural
condition} of the extended form of detailed balance.

\begin{definition}
A system of reactions (\ref{ReactMech}) satisfies the
structural condition of the extended form of detailed balance
if for every $\boldsymbol{\lambda} \in \Lambda_+$ the reaction
which satisfy $\lambda_s>0$ are reversible: if $\lambda_s>0$
then $k_s^{\pm}>0$.
\end{definition}

This condition means that all cycles should be reversible. The
second condition means that for all cycles
$\boldsymbol{\lambda} \in \Lambda_+$ which persist in the
system with irreversible reactions the correspondent
Wegscheider condition (\ref{WegscheiderLambda}) holds. This is
the {\em algebraic condition} of the extended form of detailed
balance. Now, we are ready to formulate the definition of the
extended form of detailed balance.

\begin{definition}The subsystem satisfies the extended form of detailed balance if both the
structural and the algebraic condition hold for all
$\boldsymbol{\lambda} \in \Lambda_+$.
\end{definition}

The following theorem gives the motivation to this definition.

\begin{theorem}\label{Theo:Equival}A system with irreversible reactions is a limit of systems
with reversible reactions and detailed balance if and only if
it satisfies the extended form of detailed balance.
\end{theorem}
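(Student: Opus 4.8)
The plan is to prove the two implications separately. The direction ``a limit of reversible systems with detailed balance $\Rightarrow$ extended detailed balance'' is routine and follows from the Corollary preceding this theorem. Given reversible mechanisms with detailed balance and constants $k_r^{\pm}(\varepsilon)$ for which $k_r^+(\varepsilon) \to k_r^+ > 0$ for every $r$, while $k_r^-(\varepsilon) \to 0$ exactly on the irreversible reactions, Corollary~\ref{Corr:1} gives the Wegscheider identity (\ref{WegscheiderLambda}) for every $\boldsymbol{\lambda} \in \Lambda_+$ at each $\varepsilon > 0$. Passing to the limit, if some $\boldsymbol{\lambda} \in \Lambda_+$ had $\lambda_s > 0$ for an irreversible $s$, the right-hand side of (\ref{WegscheiderLambda}) would tend to $0$ while the left-hand side tends to $\prod_r (k_r^+)^{\lambda_r} > 0$, a contradiction; this is the structural condition. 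The structural condition then forces every $\boldsymbol{\lambda} \in \Lambda_+$ to be supported on reversible reactions, all surviving constants have positive limits, and the limit of (\ref{WegscheiderLambda}) is the algebraic condition.

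The substantial direction is ``extended detailed balance $\Rightarrow$ a limit'', for which I would use the shifted-equilibrium construction of Section~\ref{Sec:Shifted}. Fix the prescribed $k_r^+$ and seek exponents $\delta = (\delta_i)$ and positive constants $c_i$ so that, setting $a_i^{\rm eq}(\varepsilon) = c_i \varepsilon^{\delta_i}$ and defining the reverse constants by detailed balance,
\begin{equation*}
k_r^-(\varepsilon) = k_r^+ \prod_i \bigl(a_i^{\rm eq}(\varepsilon)\bigr)^{-\gamma_{ri}} = k_r^+ \Bigl(\prod_i c_i^{-\gamma_{ri}}\Bigr)\, \varepsilon^{-(\gamma_r, \delta)},
\end{equation*}
the limit $\varepsilon \to 0$ reproduces the target mechanism. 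For every $\varepsilon > 0$ this system is fully reversible with detailed balance by construction, so it remains only to control the exponents: I need $(\gamma_r, \delta) = 0$ for each reversible reaction, so that $k_r^-(\varepsilon)$ stays finite and positive, and $(\gamma_r, \delta) < 0$ for each irreversible reaction, so that $k_r^-(\varepsilon) \to 0$. In addition the $c_i$ must satisfy $\sum_i \gamma_{ri} \ln c_i = \ln K_r$ on the reversible reactions, so that the surviving reverse constants equal the prescribed $k_r^-$.

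The two hypotheses furnish exactly these ingredients, and here lies the real content. The existence of $\delta$ is equivalent to the structural condition: with $S$ the span of the reversible stoichiometric vectors I want $\delta \in S^\perp$ with $(\gamma_r, \delta) < 0$ on the irreversible reactions, and by the separating-hyperplane theorem this holds if and only if the compact set $\mathrm{conv}\{\gamma_r : r \ \text{irreversible}\}$ is disjoint from $S$, which is the geometric form of the structural condition; the only subtlety is that the separating functional must annihilate $S$, which is forced since $S$ is a subspace. The existence of the $c_i$ is equivalent to the algebraic condition: the structural condition makes every $\boldsymbol{\lambda} \in \Lambda_+$ supported on reversible reactions, so the oriented cycles of the whole system coincide with those of the reversible subsystem, the algebraic condition is just the Wegscheider identity for that subsystem, and Corollary~\ref{Corr:1} converts it into solvability of $\sum_i \gamma_{ri} \ln c_i = \ln K_r$. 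The main obstacle is precisely this passage, via convex separation, from the combinatorial structural condition on $\Lambda_+$ to a single exponent vector $\delta$ that governs all the limits at once; the same construction should yield the sharper Theorem~\ref{Theo:Exponents+Weak}.
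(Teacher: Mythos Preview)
Your proof is correct and follows essentially the same route as the paper: the forward direction by passing the Wegscheider identities to the limit, and the converse via the shifted-equilibrium construction $a_i^{\rm eq}=c_i\varepsilon^{\delta_i}$, with $\delta$ obtained from the structural condition by separation (this is the paper's Theorem~\ref{Theo:ConstrCrit} and Remark~\ref{Rem:Separation}) and $c_i$ from the algebraic condition on the reversible part. The only step you omit, which the paper adds in proving the sharper Theorem~\ref{Theo:Exponents+Weak}, is shifting $\delta$ by a strictly positive conservation law $b$ so that all $\delta_i\geq 0$; this extra normalization is not needed for Theorem~\ref{Theo:Equival} itself.
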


\begin{proof}
Let us prove the direct statement: if a system is a limit of systems
with reversible reactions and detailed balance then it satisfies the
extended form of detailed balance. Indeed, let a system of reactions
be a limit of systems with reversible reactions and detailed
balance. This means that for each $j=1,2,\ldots$ a set of reaction
rate constants $k_{s,j}^{\pm}>0$ is given, $k_{s,j}^{\pm}>0$ satisfy
the principle of detailed balance for all $j$ and
$$k_s^{\pm}=\lim_{j\to \infty} k_{s,j}^{\pm}\, .$$
Assume that  the structural condition is violated: there exists such
a $\boldsymbol{\lambda} \in \Lambda_+$ that $\lambda_s
>0$ for an irreversible reaction ($k_s^+>0$, $k_s^-=0$). For
all $j=1,2,\ldots$ the principle of detailed balance gives:
\begin{equation}\label{DetBalLim}
\prod_{r,\, \lambda_r>0} (k_{r,j}^+)^{\lambda_r} =\prod_{r,\,
\lambda_r>0} (k_{r,j}^-)^{\lambda_r}\, .
\end{equation}
If $\lambda_r>0$ then $k_r^+>0$. Therefore, for these $r$,
sufficiently large $j$ and some $\varepsilon, \delta>0$
$\delta>k_{r,j}^{\pm}>\varepsilon> 0$. The left hand side of
(\ref{DetBalLim}) is separated from zero. The right hand side
of (\ref{DetBalLim}) tends to zero  because all factors are
bounded and at least one of them tends to zero, $k_{r,j}^-\to
0$. This contradiction proves the structural condition. To
prove the algebraic condition, it is sufficient to notice that
the Wegscheider identity for $k_{s,j}^{\pm}>0$ holds for all
$j$, hence, it holds in the limit $j\to\infty$.

We will prove the converse statement (if a system  satisfies
the extended form of detailed balance then it is a limit of
systems with reversible reactions and detailed balance) in
Section~\ref{Sec:Shifted}, in the proof of
Theorem~\ref{Theo:Exponents+Weak}.
\end{proof}

\subsection{Criteria \label{Sec:Crit}}

All $\boldsymbol{\lambda} \in \Lambda_+$ participate in the
definition of the extended form of detailed balance.
Nevertheless, it is sufficient to use a finite subset of this
cone.

We can check directly that if for a set
$\{\boldsymbol{\lambda}^s\}$ the structural and the algebraic
conditions of the extended form of detailed balance hold then
they hold for any conic combination of
$\{\boldsymbol{\lambda}^s\}$, $\boldsymbol{\lambda}=\sum_s a_s
\boldsymbol{\lambda}^s$, $a_s \geq 0$. Therefore, it is
sufficient to check the conditions for the directional vectors
of the extreme rays of $\Lambda_+$.

Let a reaction mechanism satisfy the extended principle of
detailed balance. If we delete from this mechanisms any
irreversible elementary reaction or any couple of mutually
reverse elementary reactions, the resulting mechanism satisfies
the extended principle of detailed balance as well.

A  cone is pointed if the origin is its extreme point or, which
is the same, this cone does not include a whole straight line.
The cone $\Lambda_+$ is pointed because it belongs to the
positive orthant $\{\boldsymbol{\lambda} \ | \
\boldsymbol{\lambda}\geq 0\}$.

It is a standard task of linear programming and computational
convex geometry to find all the extreme rays  of the polyhedral
pointed cone $\Lambda_+$
\citep{Bertsimas1997,MotzkinAtAl1953,FukudaProdon1996}. Let the
directional vectors of these extreme rays be
$\{\boldsymbol{\lambda}^s \ | \ s=1, \cdots , q \}$.

\begin{theorem}\label{Theo:2}The system satisfies the extended form of detailed balance if
and only if the structural and algebraic conditions hold for
the directional vectors $\{\boldsymbol{\lambda}^s \ | \ s=1,
\cdots , q \}$ of the extreme rays of the cone $\Lambda_+$.
\end{theorem}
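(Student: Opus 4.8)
The plan is to exploit the Minkowski--Weyl structure of the pointed polyhedral cone $\Lambda_+$. Since $\Lambda_+$ is pointed and polyhedral (it lies in the positive orthant and is cut out by the finite linear system (\ref{lambdaGamma}) together with the inequalities $\lambda_r \geq 0$), every $\boldsymbol{\lambda} \in \Lambda_+$ can be written as a nonnegative combination of the directional vectors of its extreme rays, $\boldsymbol{\lambda} = \sum_{s=1}^q a_s \boldsymbol{\lambda}^s$ with $a_s \geq 0$. The ``only if'' direction is then immediate: each $\boldsymbol{\lambda}^s$ is itself an element of $\Lambda_+$, so if the structural and algebraic conditions hold for all of $\Lambda_+$ they hold in particular for the finite list $\{\boldsymbol{\lambda}^s\}$.

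For the substantive ``if'' direction I would fix an arbitrary $\boldsymbol{\lambda} \in \Lambda_+$, take its conic decomposition, and verify the two conditions separately, treating the structural condition first because the algebraic condition only acquires meaning once reversibility is known. For the structural condition, observe that all coordinates $\lambda_r^s$ and all coefficients $a_s$ are nonnegative, so $\lambda_r = \sum_s a_s \lambda_r^s > 0$ forces at least one index $s$ with $a_s > 0$ and $\lambda_r^s > 0$. Applying the structural condition already verified for $\boldsymbol{\lambda}^s$ shows that reaction $r$ is reversible, i.e. $k_r^{\pm} > 0$; since $r$ was an arbitrary index in the support of $\boldsymbol{\lambda}$, the structural condition holds for $\boldsymbol{\lambda}$.

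For the algebraic condition I would pass to the logarithmic form. Because the support of $\boldsymbol{\lambda}$ consists of reversible reactions (just established), each ratio $K_r = k_r^+/k_r^-$ appearing with a positive exponent is well defined, and the Wegscheider identity (\ref{WegscheiderLambda}) is equivalent to the \emph{linear} relation $\sum_r \lambda_r \ln K_r = 0$. Linearity is the whole point: substituting $\boldsymbol{\lambda} = \sum_s a_s \boldsymbol{\lambda}^s$ gives $\sum_r \lambda_r \ln K_r = \sum_s a_s \bigl(\sum_r \lambda_r^s \ln K_r\bigr)$, and each inner sum vanishes by the algebraic condition for $\boldsymbol{\lambda}^s$, so the whole expression is zero. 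This establishes the Wegscheider identity for $\boldsymbol{\lambda}$ and completes the ``if'' direction.

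The argument is essentially routine once the cone geometry is in place, so I do not expect a serious obstacle; the one point that requires care is the \emph{order} of verification. The algebraic identity in its logarithmic (linear) form presupposes that $\ln K_r$ exists for every $r$ in the support, which is false for genuinely irreversible reactions. Verifying the structural condition first removes this difficulty, and the nonnegativity of the conic coefficients is exactly what guarantees that no cancellation in the support can occur: a combination of extreme rays can only enlarge the support, never shrink it below what the structural condition already controls.
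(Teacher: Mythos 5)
Your proof is correct and takes essentially the same route as the paper: the paper's argument is exactly the Minkowski representation of the pointed cone $\Lambda_+$ by its extreme rays combined with the (merely asserted) fact that the structural and algebraic conditions are preserved under conic combinations $\boldsymbol{\lambda}=\sum_s a_s \boldsymbol{\lambda}^s$, $a_s\geq 0$. You have simply written out the details the paper dismisses with ``we can check directly'' --- in particular the correct order of verification (structural condition first, so that $\ln K_r$ is defined on the support) and the log-linearity of the Wegscheider identity (\ref{WegscheiderLambda}).
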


Theorem~\ref{Theo:2} follows just from the definition of
extreme rays and the Minkowski theorem which states that every
pointed cone given by linear inequalities admits a unique
representation as a conic hull of a finite set of extreme rays.

This criterion can be simplified as well: it is necessary and
sufficient to check the structural conditions for the extreme
rays of $\Lambda_+$ and then the algebraic condition for a
maximal linear independent subset of $\{\boldsymbol{\lambda}^s
\ | \ s=1, \cdots , q \}$.

\begin{corollary}The system satisfies the extended form of detailed balance if
and only if the structural conditions hold for all directional
vectors $\{\boldsymbol{\lambda}^s \ | \ s=1, \cdots , q \}$ of
the extreme rays of the cone $\Lambda_+$ and the algebraic
conditions hold for a maximal linear independent subset of
$\{\boldsymbol{\lambda}^s \ | \ s=1, \cdots , q \}$.
\end{corollary}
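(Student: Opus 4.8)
The plan is to deduce the Corollary from Theorem~\ref{Theo:2} by exploiting the fact that, once the structural condition has been imposed, the algebraic (Wegscheider) condition becomes \emph{linear} in $\boldsymbol{\lambda}$. First I would invoke Theorem~\ref{Theo:2}, which already reduces the extended form of detailed balance to the validity of both the structural and the algebraic conditions on every extreme ray $\boldsymbol{\lambda}^s$ ($s=1,\ldots,q$). The structural part is carried over verbatim---no reduction is claimed there---so the entire content of the Corollary is the further reduction of the algebraic part from all extreme rays to a maximal linearly independent subset of them.

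The key step is to pass to the logarithmic form of the Wegscheider identity (\ref{WegscheiderLambda}). Assuming the structural condition holds on every extreme ray, the support of each $\boldsymbol{\lambda}^s$ consists only of reversible reactions, so for every $r$ in that support both $k_r^{\pm}>0$ and the quantity $d_r:=\ln k_r^+-\ln k_r^-$ is finite. Writing $L(\boldsymbol{\lambda})=\sum_r \lambda_r d_r$, the algebraic condition for $\boldsymbol{\lambda}^s$ is precisely $L(\boldsymbol{\lambda}^s)=0$. I would then observe that $L$ is a genuine linear functional on the linear hull $V=\mathrm{span}\{\boldsymbol{\lambda}^1,\ldots,\boldsymbol{\lambda}^q\}$: every vector of $V$ is a linear combination of the extreme rays and is therefore supported on the union of their supports, which is contained in the set of reversible reactions; hence no undefined term $\ln k_r^-$ with $k_r^-=0$ can ever enter, and $L$ is well defined on all of $V$. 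A maximal linearly independent subset of $\{\boldsymbol{\lambda}^s\}$ is a basis of $V$, and the extreme rays span $V$, so $L$ vanishes on all extreme rays if and only if it vanishes on this basis.

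The two implications of the Corollary now follow directly. For the ``only if'' direction I would simply restrict the conclusion of Theorem~\ref{Theo:2}. For the ``if'' direction I would begin from the hypotheses (structural on all extreme rays, algebraic on the chosen basis), use the structural part to guarantee that $L$ is defined on $V$, propagate $L=0$ from the basis to every extreme ray by linearity, and then close the loop with Theorem~\ref{Theo:2}. I would also record why the structural condition cannot be reduced in the same way: it constrains the positivity pattern of the coordinates, which is not preserved under linear combinations carrying negative coefficients, so it genuinely must be checked on each extreme ray. The main obstacle to keep in view is exactly the well-definedness just flagged---the logarithmic functional $L$ only makes sense after the structural condition has pinned the supports to the reversible reactions; once that is secured the remaining argument is pure linear algebra, the single conceptual move being the recognition that the \emph{multiplicative} Wegscheider identities linearize under $\ln$, turning a condition on a whole cone into a condition on a vector-space basis.
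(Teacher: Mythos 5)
Your proposal is correct and matches the paper's intent: the paper states this corollary as an immediate simplification of Theorem~\ref{Theo:2} without writing out the argument, and the justification it has in mind is precisely your observation that, once the structural condition pins the supports of the extreme rays to reversible reactions, the Wegscheider identities linearize under $\ln$ and therefore propagate from a maximal linearly independent subset to the whole set by linearity. Your explicit attention to the well-definedness of the logarithmic functional is a worthwhile detail the paper leaves implicit.
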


If, for a given reaction mechanism, the set
$\{\boldsymbol{\lambda}^s \ | \ s=1, \cdots , q \}$ of
directional vectors of the extreme rays of $\Lambda_+$ is
known, then it is easy to check, whether this mechanism
satisfies the structural conditions of the extended form of
detailed balance. It is sufficient to examine for each
${\lambda}^s_r>0$, whether $k_r^->0$.

After these conditions are examined, it is a simple task to
extract the independent set of the Wegscheider identities that
should be valid: just select a maximal linear independent
subset from the set of $\boldsymbol{\lambda}^s$ and write the
correspondent Wegscheider identities.

It is convenient to use all the extreme pathways especially if
we would like to study all the subsystems of the given system,
which satisfy the extended form of detailed balance. On the
other hand, it is computationally expensive to find the set
$\{\boldsymbol{\lambda}^s \ | \ s=1, \cdots , q \}$ (see, for
example, the paper by \cite{ElemNodesBMS2004}). The amount of
computation could be significantly reduced because it is not
necessary to use all the extreme pathways.

Let us consider a reaction mechanism, which includes both
reversible and irreversible reactions. For the reversible
reactions, let us join the direct and reverse reactions. Let
$\gamma_1, \ldots , \gamma_r$ be the stoichiometric vectors of
the reversible reactions and $\nu_1, \ldots , \nu_s$ be the
stoichiometric vectors of the irreversible reactions. We use
$\boldsymbol{\Gamma}_r$ for the stoichiometric matrix of the
reversible reactions and $\Lambda_r$ for the solutions of the
equations $\boldsymbol{\lambda}\boldsymbol{\Gamma}_r=0$.

The linear subspace $S={\rm span}\{\gamma_1, \ldots ,
\gamma_r\} \subset \mathbb{R}^n$ consists of all linear
combinations of the stoichiometric vectors of the reversible
reactions. Let us consider the quotient space $\mathbb{R}^n/S$.
We use notation $\overline{\nu}_j$ for the images of $\nu_j$ in
$\mathbb{R}^n/S$.

The following theorem gives the criteria of the extended form
of detailed balance, which are more efficient for computations.

\begin{theorem}\label{Theo:ConstrCrit}The system satisfies the extended form of detailed
balance if and only if
\begin{enumerate}
\item{The convex hull of the stoichiometric vectors of
    irreversible reactions does not intersect $S$, i.e.
    \begin{equation}
    0 \notin {\rm
conv}\{\overline{\nu}_1, \ldots , \overline{\nu}_s\}\, ;
    \end{equation}
    }
\item{The reversible reactions satisfy the principle of
    detailed balance.}
\end{enumerate}
\end{theorem}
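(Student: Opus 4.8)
The plan is to prove the two conditions separately: condition~1 is equivalent to the \emph{structural} condition of the extended form of detailed balance, and, once the structural condition is granted, condition~2 is equivalent to the \emph{algebraic} condition. Together with the definition of the extended form of detailed balance, these two equivalences give both directions of the theorem.

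First I would parametrize the oriented cycles of the full separated list. Writing the forward and reverse of each reversible reaction with separate nonnegative coefficients $\lambda_i^+,\lambda_i^-\geq 0$ (stoichiometric vectors $\gamma_i$ and $-\gamma_i$) and each irreversible reaction with a coefficient $\mu_j\geq 0$, the cycle equation $\boldsymbol{\lambda}\boldsymbol{\Gamma}=0$ becomes
\[
\sum_{i=1}^{r} c_i\,\gamma_i + \sum_{j=1}^{s} \mu_j\,\nu_j = 0,
\]
with $c_i:=\lambda_i^+-\lambda_i^-\in\mathbb{R}$ and $\mu_j\geq 0$. Since each $c_i$ is an unrestricted real number, this identity forces $\sum_j \mu_j\nu_j$ to lie in $S=\mathrm{span}\{\gamma_1,\ldots,\gamma_r\}$. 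The structural condition is violated exactly when some irreversible coordinate can be made positive, i.e.\ when there is a nonzero $\boldsymbol{\mu}\geq 0$ with $\sum_j \mu_j\nu_j\in S$. Passing to the quotient $\mathbb{R}^n/S$ this reads $\sum_j \mu_j\overline{\nu}_j=0$ with $\boldsymbol{\mu}\geq 0$, $\boldsymbol{\mu}\neq 0$; dividing by $\sum_j\mu_j>0$ this is precisely $0\in\mathrm{conv}\{\overline{\nu}_1,\ldots,\overline{\nu}_s\}$. Hence the structural condition holds if and only if condition~1 holds.

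Next, assuming condition~1 (equivalently the structural condition), every $\boldsymbol{\lambda}\in\Lambda_+$ has zero coordinates on all irreversible reactions, so every oriented cycle of the full system is supported on the reversible reactions; thus $\Lambda_+$ coincides with the positive cone of the reversible subsystem (embedded with zeros on the irreversible coordinates). The algebraic condition then demands the Wegscheider identity~(\ref{WegscheiderLambda}) for every positive $\boldsymbol{\lambda}$ of the reversible subsystem, which by Corollary~\ref{Corr:1} is equivalent to the solvability of~(\ref{DetBalGen}) for the reversible subsystem, i.e.\ to condition~2. The trivial two-term cycles formed by a reaction and its reverse contribute only trivial Wegscheider identities and may be ignored. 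Assembling the two equivalences proves both directions of the theorem.

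I expect the first step---the translation of the structural condition into the convex-hull statement in the quotient space---to be the main obstacle. It requires carefully tracking how the unrestricted net coefficients $c_i$ collapse the reversible directions into the subspace $S$ while the nonnegative irreversible coefficients survive in $\mathbb{R}^n/S$, and the passage from a nontrivial conic relation among the $\overline{\nu}_j$ to membership of $0$ in their convex hull hinges on the normalization $\sum_j\mu_j>0$, which is exactly what the existence of a positive irreversible coordinate guarantees.
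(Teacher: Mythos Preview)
Your proposal is correct and follows essentially the same route as the paper's proof. The paper also shows that condition~1 is equivalent to the structural condition (by constructing an element of $\Lambda_+$ from a convex combination witnessing $0\in\mathrm{conv}\{\overline{\nu}_j\}$, flipping signs on reversible reactions as needed, and conversely normalizing) and then observes that, under condition~1, the algebraic condition reduces exactly to detailed balance for the reversible part; your parametrization via $c_i=\lambda_i^+-\lambda_i^-$ is simply a cleaner packaging of the same sign-flipping step.
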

\begin{proof}
Let the condition 1 be violated, i.e.  $0 \in {\rm
conv}\{\overline{\nu}_1, \ldots , \overline{\nu}_s\}$. In this case,
there exist such a nonnegative $\theta_i \geq 0$ that $\sum_{j=1}^s
\theta_j =1$ and $\sum_{j=1}^s \theta_j \nu_j \in S$. This means
that $\sum_{j=1}^s \theta_j \nu_j + \sum_{i=1}^r \lambda_i
\gamma_i=0$. We can transform the sum $\sum_{i=1}^r \lambda_i
\gamma_i$ in a combination with positive coefficients if for any
negative $\lambda_i$ we substitute $\gamma_i$ by the stoichiometric
vector of the reverse reaction, that is, $-\gamma_i$. As a result,
we get the element of $\Lambda_+$, a combination of the
stoichiometric vectors with nonnegative coefficients, which is equal
to zero and includes some stoichiometric vectors of the irreversible
reactions with nonzero coefficients. Therefore, the structural
condition of the extended form of detailed balance is violated.

Let the structural condition be violated. Then there exist a
combination $\sum_{j=1}^s \theta_j \nu_j + \sum_{i=1}^r
\lambda_i \gamma_i=0$ with $\theta_j \geq 0$ and $\sum_{j=1}^s
\theta_j>0$. Let us notice that
$$\sum_{j=1}^s
\frac{\theta_j}{\sum_{j=1}^s \theta_j} \nu_j =- \sum_{i=1}^r
\frac{\lambda_i}{\sum_{j=1}^s \theta_j} \gamma_i\, ,$$ and, therefore, $0 \in {\rm
conv}\{\overline{\nu}_1, \ldots , \overline{\nu}_s\}$. The
condition 1 is violated.

We proved that the condition 1 is equivalent to the structural
condition of the extended form of detailed balance.

If the condition 1 holds then the condition 2 is, exactly, the
algebraic condition of the extended form of detailed balance.
\end{proof}

\begin{remark}\label{Rem:Separation}
The first condition of Theorem, $0 \notin {\rm
conv}\{\overline{\nu}_1, \ldots , \overline{\nu}_s\}$, is
equivalent to the existence of such a linear functional $l$ on
$\mathbb{R}^n$ that $l(\nu_j)>0$ for all $j=1,\ldots , s$ and
$l(\gamma_j)=0$ for all $j=1,\ldots , r$.
\end{remark}

\subsection{Linear Systems \label{Sec:linear}}

The results of previous Sections for a linear system
(\ref{Master1}) have a geometrically clear form (see also the
paper by \cite{YabGor2010}).

\begin{proposition}\label{LinNEsSufWeDB}
The necessary and sufficient condition for the extended form of
detailed balance is:  In any cycle $A_{i_1}\to A_{i_2} \to
\ldots \to A_{i_q} \to A_{i_1}$ with the strictly positive
constants $k_{i_{j+1} i_j}>0$ (here $i_{q+1}=i_1$) all the
reactions are reversible ($k_{i_j i_{j+1}}>0$) and the identity
(\ref{WegscheiderCondition}) holds.
\end{proposition}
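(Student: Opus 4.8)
The plan is to specialize the general machinery already developed---the definition of the extended form of detailed balance together with Theorem~\ref{Theo:2}---to the monomolecular case, using the correspondence between the abstract cycles of Definition~\ref{Def:OrientedCycles} and the oriented cycles of the reaction graph noted above. First I would make the stoichiometry explicit: the reaction $A_i \to A_j$ has stoichiometric vector $\gamma = e_j - e_i$, where $e_i$ is the $i$th standard basis vector. For an oriented cycle $A_{i_1}\to A_{i_2}\to \cdots \to A_{i_q}\to A_{i_1}$ the corresponding vectors telescope, $\sum_{j=1}^q (e_{i_{j+1}}-e_{i_j}) = 0$, so assigning $\lambda_s=1$ to each edge of the cycle and $\lambda_s=0$ otherwise produces a vector $\boldsymbol{\lambda}\in\Lambda_+$ solving (\ref{lambdaGamma}). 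Thus every oriented cycle of the graph with strictly positive constants is a legitimate oriented cycle in the sense of $\Lambda_+$.

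Next I would read off the two conditions for such a cycle vector. Because $\lambda_s = 1 > 0$ exactly on the edges of the cycle, the structural condition (``$\lambda_s>0 \Rightarrow k_s^{\pm}>0$'') demands that every edge reaction $A_{i_j}\to A_{i_{j+1}}$ be reversible, i.e.\ $k_{i_j i_{j+1}}>0$; this is precisely the reversibility assertion of the proposition. The algebraic condition is the Wegscheider identity (\ref{WegscheiderLambda}) for this $\boldsymbol{\lambda}$, and since all nonzero $\lambda_s$ equal $1$ it collapses to $\prod_{j=1}^q k_{i_{j+1}i_j} = \prod_{j=1}^q k_{i_j i_{j+1}}$, which is exactly (\ref{WegscheiderCondition}). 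Hence the extended form of detailed balance implies the stated condition on every oriented cycle, giving the necessity (``only if'') direction at once.

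For sufficiency I would argue that checking all oriented graph cycles already captures all of $\Lambda_+$. The key fact is that a nonnegative solution $\boldsymbol{\lambda}$ of $\boldsymbol{\lambda}\boldsymbol{\Gamma}=0$ is, for a monomolecular network, exactly a circulation on the reaction graph: the coefficient $\lambda_s$ is a flow on edge $s$ and the constraint $\boldsymbol{\lambda}\boldsymbol{\Gamma}=0$ states that inflow equals outflow at every vertex. By the classical flow-decomposition theorem every such circulation is a nonnegative combination of elementary (simple) cycle vectors, and these simple cycles are precisely the directional vectors of the extreme rays of $\Lambda_+$. Since both the structural and the algebraic conditions are preserved under conic combinations (as already remarked before Theorem~\ref{Theo:2}), validity of the two conditions on every simple cycle propagates to all of $\Lambda_+$, which is the definition of the extended form of detailed balance. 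Equivalently, one may invoke Theorem~\ref{Theo:2} directly, the extreme rays being the simple cycles.

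The step that most deserves care---and the natural main obstacle---is the identification of the extreme rays of $\Lambda_+$ with the elementary cycles of the reaction graph, together with the flow decomposition of an arbitrary $\boldsymbol{\lambda}\in\Lambda_+$ into such cycles. Although this is the classical circulation/flow-decomposition result, quoted for monomolecular systems earlier in the paper, it is what bridges the gap between ``any cycle'' in the proposition and ``every $\boldsymbol{\lambda}\in\Lambda_+$'' in the definition; the remaining reductions of the structural and algebraic conditions to (\ref{WegscheiderCondition}) are then purely bookkeeping.
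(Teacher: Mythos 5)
Your proof is correct and follows essentially the same route the paper takes: the paper's justification (given in the closing paragraph of Section~\ref{Sec:linear}) rests on exactly the two facts you isolate, namely that oriented graph cycles yield elements of $\Lambda_+$ via the telescoping sum of $e_{i_{j+1}}-e_{i_j}$, and that every nonnegative solution of (\ref{lambdaGamma}) decomposes as a conic combination of such cyclic solutions, so the extreme rays of $\Lambda_+$ may be chosen as simple cycles and Theorem~\ref{Theo:2} applies. Your reduction of the structural and algebraic conditions to reversibility of each edge and to (\ref{WegscheiderCondition}) is the same bookkeeping the paper intends.
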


The states (reagents) $A_q$ and $A_r$ ($q\neq r$) are {\em
strongly connected} if there exist oriented paths both from
$A_q$ to $A_r$ and from $A_r$ to $A_q$ (each oriented edge
corresponds to a reaction with the nonzero reaction rate
constant). From Proposition~\ref{LinNEsSufWeDB} we get the
following statement.

\begin{corollary}
Let a linear system satisfy the extended form of detailed
balance. Then all reactions in any directed path between
strongly connected states are reversible.
\end{corollary}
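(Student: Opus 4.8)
The plan is to reduce the statement to Proposition~\ref{LinNEsSufWeDB} by showing that every reaction lying on a directed path between strongly connected states also lies on an \emph{oriented cycle} all of whose rate constants are strictly positive. Once such a cycle is produced, Proposition~\ref{LinNEsSufWeDB} immediately forces all of its reactions to be reversible, and in particular the chosen one.

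Concretely, I would fix strongly connected states $A_q$ and $A_r$, fix a directed path $P$ running $A_q \to \cdots \to A_r$ (built from reactions with positive rate constants), and let a reaction $A_a \to A_b$ be an arbitrary edge of $P$. I would split $P$ into its head $A_q \to \cdots \to A_a$ and its tail $A_b \to \cdots \to A_r$. By strong connectivity there exists a directed path $P'$ running $A_r \to \cdots \to A_q$, again consisting of reactions with positive constants. Concatenating the tail of $P$, then $P'$, then the head of $P$, produces a directed walk $W$ running $A_b \to \cdots \to A_a$, all of whose edges carry positive rate constants.

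A directed walk from $A_b$ to $A_a$ always contains a \emph{simple} directed path $R$ running $A_b \to \cdots \to A_a$ (delete repeated loops); since $R$ uses only edges of $W$, all its constants stay positive, and since $R$ is simple its only occurrence of $A_a$ is as the terminal vertex. Prepending the edge $A_a \to A_b$ to $R$ then yields a closed walk $A_a \to A_b \to \cdots \to A_a$ whose interior vertices are exactly the distinct vertices of $R$ other than $A_a$; hence it is a genuine simple oriented cycle with all rate constants positive. Now Proposition~\ref{LinNEsSufWeDB} applies and forces every reaction on this cycle to be reversible, in particular the reaction $A_a \to A_b$. As this edge was an arbitrary edge of $P$, every reaction along $P$ is reversible, which is the claim.

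The one delicate point I expect is the cycle-extraction step: one must ensure that passing from the walk $W$ to a simple path $R$ neither discards $A_a$ nor secretly reintroduces it as an interior vertex, so that prepending the edge $A_a \to A_b$ genuinely closes a \emph{simple} cycle rather than only a closed walk. (A closed walk would still decompose into cycles, but then one would have to argue separately that the particular edge $A_a \to A_b$ survives in one of them.) Choosing $R$ simple with $A_a$ appearing solely as its terminal vertex keeps the edge $A_a \to A_b$ attached to a single simple cycle and removes this ambiguity.
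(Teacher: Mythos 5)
Your argument is correct and is precisely the intended one: the paper gives no explicit proof, stating only that the corollary follows from Proposition~\ref{LinNEsSufWeDB}, and your construction (closing the edge into a simple oriented cycle of positive-constant reactions via strong connectivity, then invoking the proposition) is the standard way to fill in that step. The care you take with extracting a simple path so that the prepended edge closes a genuine simple cycle is appropriate and resolves the only point where the "immediate" deduction could wobble.
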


In brief, a linear system with the extended form of detailed
balance can be described as follows: (i) the oriented cycles
are reversible and satisfy the classical condition
(\ref{WegscheiderCondition}), (ii) the system consists of the
reversible parts and the irreversible transitions between these
parts and (iii) the system of these irreversible transitions is
acyclic.

For example, let us analyze subsystems of the simple cycle,
$A_1 \rightleftharpoons A_2 \rightleftharpoons A_3
\rightleftharpoons A_1$.
\begin{equation}\label{TranspGammaCycle}
\boldsymbol{\Gamma}^{\mathrm{T}}= \left[
\begin{array}{rrrrrr}
 -1 & 0 & 1 & 1 & 0 &-1 \\
 1  &-1 & 0 &-1 & 1 & 0 \\
 0  & 1 &-1 & 0 &-1 & 1
\end{array}
\right]
\end{equation}

The cone of nonnegative solutions $\Lambda_+$ to the equation
$\boldsymbol{\lambda \Gamma}=0$ has extreme rays with the
following direction vectors: $
\boldsymbol{\lambda}^1=(1,1,1,0,0,0)$,
$\boldsymbol{\lambda}^2=(0,0,0,1,1,1)$,
$\boldsymbol{\lambda}^3=(1,0,0,1,0,0)$,
$\boldsymbol{\lambda}^4=(0,1,0,0,1,0)$, and
$\boldsymbol{\lambda}^5=(0,0,1,0,0,1)$. Vectors
$\boldsymbol{\lambda}^{3-5}$ give trivial identities
(\ref{WegscheiderLambda}) $k_i^+k_i^-=k_i^-k_i^+$ ($i=1, 2, 3$)
and vectors $\boldsymbol{\lambda}^{1,2}$ give the same
identity: $k_1^+k_2^+k_3^+=k_1^-k_2^-k_3^-$.

If we delete one elementary reaction from the simple cycle
(i.e. one column from $\boldsymbol{\Gamma}^{\mathrm{T}}$
(\ref{TranspGammaCycle})) then one of the nonnegative solutions
$\boldsymbol{\lambda}^{1,2}$ persists and, due to the extended
detailed balance principle, all the reactions should be
reversible. This means that the structural condition of
extended detailed balance is not satisfied for the simple
reversible cycle without one direct or reverse reaction. If two
reactions are reversible then the third should be reversible or
completely vanish. If we delete one direct reaction (with
number 1, 2 or 3) and one reverse reaction (with number 4, 5 or
6) then there remain no non-trivial solutions in $\Lambda_+$
and, therefore, no non-trivial relations between the constants
persist after deletion of these two reactions.

For the linear systems, the oriented cycles in the graph of
reactions (where vertices are the components and edges are the
reactions) give the positive solutions to the equation
(\ref{lambdaGamma}): for a linear oriented cycle $C$ the sum of the
stoichiometric vectors of its reactions is zero.  Moreover, any
positive solution of (\ref{lambdaGamma}) is a convex combination of
such cyclic solutions and, therefore, the directed vectors of the
extreme rays of $\Lambda_+$ can be selected in this form.

\subsection{Simple Examples of Nonlinear Systems \label{Sec:NonlinExamples}}

In this section, we present several elementary examples. For
these examples, the sets of the extreme pathways are obvious.

Let us examine  a reaction mechanism with irreversible
reactions $A \xrightarrow{k_1} B$  and $2B \xrightarrow{k_2}
2A$.
\begin{equation}\label{GammaFirst}
\boldsymbol{\Gamma}^{\mathrm{T}}= \left[
\begin{array}{rr}
 -1 & 2  \\
 1  &-2
\end{array}
\right]\, .
\end{equation}
The cone $\Lambda_+$ is a ray with the directional vector
$\boldsymbol{\lambda}=(2,1)$. Both $\lambda_{1,2}>0$, hence,
both reactions should be reversible and the condition holds:
$(k_1^+)^2 k_2^+=(k_1^-)^2 k_2^-$.

Let us slightly modify this example: $2{\rm H} \to {\rm H}_2$,
${\rm H}+{\rm H_2} \to 3 {\rm H}$.
\begin{equation}\label{GammaSec}
\boldsymbol{\Gamma}^{\mathrm{T}}= \left[
\begin{array}{rr}
 -2 & 2  \\
 1  &-1
\end{array}
\right]\, .
\end{equation}
The cone $\Lambda_+$ is a ray with the directional vector
$\boldsymbol{\lambda}=(1,1)$. Both $\lambda_{1,2}>0$, hence,
both reactions should be reversible and the condition holds:
$k_1^+ k_2^+=k_1^- k_2^-$.

If we change the direction of one reaction in the previous
example then the new irreversible systems satisfies the
extended form of detailed balance: $2{\rm H} \to {\rm H}_2$, $3
{\rm H} \to {\rm H}+{\rm H_2} $.
\begin{equation}\label{GammaThird}
\boldsymbol{\Gamma}^{\mathrm{T}}= \left[
\begin{array}{rr}
 -2 & -2  \\
 1  &1
\end{array}
\right]\, .
\end{equation}
The cone $\Lambda_+$ is trivial (it includes no rays, just the
origin), hence, the structural condition holds. The algebraic
condition trivially holds, because there is no reversible
reaction.

Let us add  the forth reversible and nonlinear elementary
reaction $A_1+A_2 \rightleftharpoons 2A_3$ (with the constants
$k_4^{\pm}$) to a linear reversible cycle.  We should add to
$\boldsymbol{\Gamma}^{\mathrm{T}}$ (\ref{TranspGammaCycle}) two
new columns:
\begin{equation}\label{TranspGammaCycle2}
\boldsymbol{\Gamma}^{\mathrm{T}}= \left[
\begin{array}{rrrrrrrr}
 -1 & 0 & 1 &-1  & 1 & 0 &-1 & 1 \\
 1  &-1 & 0 &-1  &-1 & 1 & 0 & 1 \\
 0  & 1 &-1 & 2  & 0 &-1 & 1 &-2
\end{array}
\right]
\end{equation}
The extreme rays of $\Lambda_+$ include four rays that
correspond to pairs of mutually reverse reactions
($\boldsymbol{\lambda}^{1-4}$), two rays that correspond to the
linear cycle ($\boldsymbol{\lambda}^{5,6}$) and six rays for
three nonlinear cycles ($\boldsymbol{\lambda}^{7-12}$): (i)
$A_1+A_2 \to 2A_3$, $A_3\to A_2$, $A_3\to A_1$; (ii) $A_1+A_2
\to 2A_3$, $A_3 \to A_1$, $A_1 \to A_2$ and (iii) $A_1+A_2 \to
2A_3$, $A_3 \to A_2$, $A_2 \to A_1$:
\begin{eqnarray*}
&\boldsymbol{\lambda}^5=(1,1,1,0,0,0,0,0), \;\; &\boldsymbol{\lambda}^6=(0,0,0,0,1,1,1,0),\\
&\boldsymbol{\lambda}^7=(0,0,1,1,0,1,0,0),  \;\; &\boldsymbol{\lambda}^8=(0,1,0,0,0,0,1,1),\\
&\boldsymbol{\lambda}^9=(1,0,2,1,0,0,0,0), \;\; &\boldsymbol{\lambda}^{10}=(0,0,0,0,1,0,2,1),\\
&\boldsymbol{\lambda}^{11}=(0,0,0,1,1,2,0,0),\;\;
&\boldsymbol{\lambda}^{12}=(1,2,0,0,0,0,0,1)\, .
\end{eqnarray*}
We omit  $\boldsymbol{\lambda}^{1-4}$ which do not produce
nontrivial conditions. For the reversible reaction mechanism
(when $k^{\pm}_{1-4}>0$), there are two independent Wegscheider
identities (\ref{WegscheiderLambda}) that formalize the
classical principle of detailed balance:
$k_1^+k_2^+k_3^+=k_1^-k_2^-k_3^-$ and
$k_3^+k_4^+k_2^-=k_3^-k_4^-k_2^+$. If some of the elementary
reactions are irreversible then the direction vectors
$\boldsymbol{\lambda}^{5-12}$ produce 8 conditions. For
$\boldsymbol{\lambda}^{5,7,9,11}$ these conditions are below.
\begin{itemize}
\item{($\boldsymbol{\lambda}^5$) If $k_{1,2,3}^+>0$ then
    $k_{1,2,3}^->0$ and $k_1^+k_2^+k_3^+=k_1^-k_2^-k_3^-$;}
\item{($\boldsymbol{\lambda}^7$) If $k_{3,4}^+,k_2^->0$
    then $k_{3,4}^-,k_2^+>0$ and
    $k_3^+k_4^+k_2^-=k_3^-k_4^-k_2^+$;}
\item{($\boldsymbol{\lambda}^9$) If $k_{1,3,4}^+ >0$ then
    $k_{1,3,4}^- >0$ and $k_1^+(k_3^+)^2
    k_4^+=k_1^-(k_3^-)^2 k_4^-$;}
\item{($\boldsymbol{\lambda}^{11}$) If $k_4^+,k_{1,2}^->0$
    then $k_4^-,k_{1,2}^+>0$ and
    $k_4^+k_1^-(k_2^-)^2=k_4^-k_1^+(k_2^+)^2$.}
\end{itemize}
To obtain the conditions for $\boldsymbol{\lambda}^{6,8,10,12}$
it is sufficient to change the superscripts $^+$ to $^-$ and
inverse. These 8 conditions represent the extended form of
detailed balance for a given mechanism. To check, whether a
subsystem of this mechanism satisfies the extended form of
detailed balance, it is necessary and sufficient to check these
conditions.

\subsection{Methane Reforming Processes: a Case Study \label{Sec:Methane}}

\subsubsection{The System}

Methane reforming with CO$_2$ is a complex reaction network
\citep{Benson1981}. The main reactions in the methane reforming
are:
\begin{enumerate}
\item{${\rm CO}_2 + {\rm H_2} \rightleftharpoons  {\rm CO}
    + {\rm H}_2{\rm O}$ (RWGS, Reverse water-gas shift);}
\item{${\rm CH}_4 + {\rm CO}_2 \rightleftharpoons  {\rm
    2CO} + {\rm 2H_2}$ (Dry reforming);}
\item{${\rm CO}_2 + 4{\rm H}_2 \rightleftharpoons  {\rm
    CH}_4 + 2{\rm H}_2{\rm O}$ (Methanation);}
\item{${\rm CH}_4 + {\rm H}_2{\rm O}  \rightleftharpoons  {\rm
    CO} + 3{\rm H}_2$ (Steam reforming);}
\item{${\rm CH}_4 \rightleftharpoons {\rm 2H}_2 + {\rm C}$
    (Methane decomposition);}
\item{$2{\rm CO}  \rightleftharpoons {\rm CO}_2 + {\rm C}$
    (Boudouard reaction);}
\item{${\rm C} + {\rm H}_2{\rm O}  \rightleftharpoons {\rm CO}
    + {\rm H}_2$ (Coal gasification).}
\end{enumerate}
For the reagents, we use the notations $A_1={\rm CH}_4$,
$A_2={\rm CO}_2$, $A_3={\rm CO}$, $A_4={\rm H}_2$, $A_5={\rm
H}_2{\rm O}$, $A_6={\rm C}$. Amount of $A_i$ is $N_i$. There
exist three independent linear conservation laws: $b_{\rm
C}=N_1+N_2+N_3+N_6$; $b_{\rm H}=4N_1+2N_4+2N_5$; $b_{\rm
O}=2N_2+N_3+N_5$. The number of degrees of freedom in the
closed system is three (six components minus three independent
conservation laws).

{This example enriches our discussion because it deviates from the
nice abstract scheme discussed above. First of all, the reactions
1--7 are not elementary steps. We consider them as overall reactions
which have their own intrinsic and complicated reaction mechanism.
This does not cause a serious problem because the generalized mass
action law describes the equilibria of the complex overall reactions
as well as the equilibria of the elementary ones. Therefore, we can
apply the concept of the extended form of detailed balance and our
theorems} \ref{Theo:Equival}--\ref{Theo:ConstrCrit} {to the process
network 1--7 build from the complex reactions. Rigorously speaking,
we deal not with the elementary reaction steps but with the main
equilibria and may discuss, for example, not the ``Boudouard
reaction" but the ``Boudouard equilibrium". }

{The second problem is the heterogeneity of the system: $A_1,
\ldots, A_5$ are gases and $A_6 = {\rm C}$ is solid. Some of the
reactions go on the surface of the solid.

If a multiphase system is ideal and the solid components are
stoichiometric ones (i.e. they have a fixed composition) then
the free energy has the form}
\begin{equation}\label{HeterFree}
F= \sum_{A_i\;   - \; {\rm gas} }N_i (RT\ln c_i + \mu_i^0-RT)+
\sum_{A_i\;   - \; {\rm solid} }N_i \mu_i^0  \, .
\end{equation}
{Here, the free energy of solid components differs from the free
energy of gases by the  absence of the term $RT N \ln c$. This term
corresponds to the ideal gas pressure $PV=NRT$. In our case,}
\begin{equation}\label{HeterFreeC}
F= \sum_{i=1}^5 N_i (RT\ln c_i + \mu_i^0-RT)+ N_6 \mu_6^0  \, .
\end{equation}
{To define the activities, we follow } (\ref{StandardActivity}).
{For the ideal gases $a_i=c_i$ and for the stoichiometric solids
$a_i\equiv 1$.}

{In section} \ref{sec:Detbal}, {we studied homogeneous systems and
considered $x_i=\ln a_i$ as independent unknowns in the detailed
balance equations} (\ref{DetBalGen}):
\begin{equation}\label{DetBalGen2}
\sum_i \gamma_{ri} x_i = \ln K_r \;\;\; (x_i=\ln a_i^{\rm eq})\, .
\end{equation}
{Therefore, for any solution of this system, the activities
$a_i=\exp x_i$ represented a positive equilibrium.

In a heterogeneous system with the free energy} (\ref{HeterFree})
{the activities for the solid components are constant, the
correspondent $x_i\equiv 0$. Let $\boldsymbol{x}=(x_i)$ be a
solution to equations} (\ref{DetBalGen2}), $\boldsymbol{\Gamma
x}=\boldsymbol{K}$, {where $\boldsymbol{K}$ is the vector of the
equilibrium constants. The vector $\boldsymbol{a}=(a_i)$, $a_i=\ln
x_i$ is a vector of equilibrium activities if and only if $x_i=0$
for all the solid components $A_i$. Instead of analyzing the
solvability of the detailed balance equations} (\ref{DetBalGen2}) {
we have to study its solvability under additional condition: $x_i=0$
for all the solid components $A_i$.}

{Let us postpone the discussion of the extended principle of
detailed balance in multiphase systems and consider the system
``gaseous mixture + one stoichiometric solid". Let $A_n$ be
solid.}

{If there is the only solid component then the solvability
conditions for the  system} (\ref{DetBalGen2}) {and for this system
with additional condition $x_n=0$ coincide. Indeed, there exist a
positive stoichiometric linear conservation law:}
$$\sum_{i=1}^n \gamma_{ri} b_i=0 \mbox{  for all  } r \mbox{  and  } b_i>0 \mbox{  for all  } i \, .$$
{For example, this may be conservation of mass or of the amount of
atoms. Let $\boldsymbol{b}=(b_i)$. For any solution of the detailed
balance conditions} (\ref{DetBalGen2}) $\boldsymbol{x}=(x_i)$, the
vector
$$\boldsymbol{x}'=\boldsymbol{x}-
\frac{x_n}{b_n}\boldsymbol{b}$$ {is also a solution to}
(\ref{DetBalGen2}) {with the condition $x'_n=0$.}

{So, for our example with seven equilibria 1--7 the conditions of
the extended principle of detailed balance for the heterogeneous
system with solid $A_6=$C are described by the theorems}
\ref{Theo:Equival}--\ref{Theo:ConstrCrit} {and we can use the
results of the preceding sections.}

\subsubsection{The Classical Wegscheider Conditions}

To formulate the classical Wegscheider identities, we have to
join the direct and inverse reactions and to find the basic
solutions of the system of linear equations
$\boldsymbol{\lambda}\boldsymbol{\Gamma}=0$. The stoichiometric
matrix for this example is:
\begin{equation}\label{TranspGammaCycle5}
\boldsymbol{\Gamma}^{\mathrm{T}}= \left[
\begin{array}{rrrrrrr}
0&  -1   & 1 & -1& -1& 0& 0\\
-1& -1   & -1& 0 & 0 & 1& 0\\
1 &  2   & 0 & 1 & 0 &-2& 1\\
-1&  2   & -4& 3 & 2 & 0& 1\\
1 &  0  & 2 & -1& 0 & 0&-1\\
0 &  0  & 0 &  0& 1 & 1&-1
\end{array}
\right]
\end{equation}
The system of seven equations
$\boldsymbol{\lambda}\boldsymbol{\Gamma}=0$ is redundant. There
are only three independent equations (one equation for every
degree of freedom). It is sufficient to take the components of
stoichiometric vectors that correspond to the components $A_2$,
$A_4$ , $A_6$. Other components satisfy the same linear
relations as the selected ones. The reduced matrix
$\boldsymbol{\Gamma}_{\rm r}^{\mathrm{T}}$ is
\begin{equation}\label{TranspGammaCycle6}
\boldsymbol{\Gamma}_{\rm r}^{\mathrm{T}}= \left[
\begin{array}{rrrrrrr}
-1&-1  & -1& 0 & 0 & 1& 0\\
-1& 2  & -4& 3 & 2 & 0& 1\\
0 & 0  & 0 &  0& 1 & 1&-1
\end{array}
\right]
\end{equation}
There are four independent solutions of the equations
$\boldsymbol{\lambda}\boldsymbol{\Gamma}=0$ (seven variables
minus three independent equations). For example, we can take
the following basis of solutions: $(-1,1,0,-1,0,0,0)$,
$(0,0,0,-1,1,0,1)$, $(1, 0,0,0,0,1,1)$, $(1,0,-1,-1,0,0,0)$.

The correspondent Wegscheider identities are: $K_2=K_1K_4$,
$K_5K_7=K_4$, $K_1K_6K_7=1$, $K_1=K_3K_4$.

\subsubsection{Allowed and Forbidden Mechanisms}

In general, all the seven reactions can be considered as reversible
but under various conditions some of them are almost irreversible.
Let us study which combinations of irreversible reactions are
possible in accordance with the extended form of detailed balance.

For example, existence of the positive solution
$(0,1,0,0,0,1,1) \in \Lambda$ guarantees that the irreversible
system ${\rm CO}_2 + {\rm H_2} \to  {\rm CO}
    + {\rm H}_2{\rm O}$, ${\rm CH}_4 + {\rm CO}_2 \to  {\rm
    2CO} + {\rm 2H_2}$,  ${\rm CO}_2 + 4{\rm H}_2 \to  {\rm CH}_4 +
    2{\rm H}_2{\rm O}$, ${\rm CH}_4 + {\rm H}_2{\rm O}  \to  {\rm
    CO} + 3{\rm H}_2$, ${\rm CH}_4 \to {\rm 2H}_2 + {\rm C}$,
    $2{\rm CO}  \to {\rm CO}_2 + {\rm C}$,
    ${\rm C} + {\rm H}_2{\rm O}  \to {\rm CO}
    + {\rm H}_2$
is forbidden by the extended form of detailed balance. This
conclusion is also obvious from the correspondent Wegscheider
condition $K_2K_6K_7=1$. Indeed, if all the $k_i^- \to 0$ for
bounded from below $k_i^+ > \varepsilon >0$ then all $K_i \to
\infty$ and $K_2K_6K_7 \to \infty$. This contradicts the Wegscheider
condition.

The first reaction (RWGS, Reverse water-gas shift) is reversible in
the wide interval of conditions \citep{Moe1962}. Let us first study
all the reaction mechanisms with the reversible first reaction and
the irreversible reactions 2-7. We find the combinations of the
directions of the irreversible reactions that satisfy the extended
form of detailed balance. As a criterion of the extended form of
detailed balance we use Theorem~\ref{Theo:ConstrCrit}. After that,
we consider other reactions as the reversible ones (in addition to
RWGS) and study the corresponding reaction mechanisms.

The space $S$ is a straight line with the directional vector
$\gamma_1$ with coordinates $(-1,-1,0)$ in the coordinate
system $(N_2,N_4,N_6)$  that corresponds to the components
$A_2$, $A_4$, $A_6$. Let us represent the quotient space
$\mathbb{R}^3/S$ in the coordinate system $(N_2,N_6)$ that
corresponds to the components $A_2$, $A_6$. For this purpose,
we have to eliminate the coordinate $N_4$ using vector
$\gamma_1$. As a result, we get the following vectors:
\begin{equation}\label{quotationVec}
\begin{split}
\overline{\gamma}_2=\left(\begin{array}{r}
-3\\
0
\end{array}\right)\, ,
\overline{\gamma}_3=\left(\begin{array}{r}
3\\
0
\end{array}\right)\, ,
\overline{\gamma}_4=\left(\begin{array}{r}
-3\\
0
\end{array}\right)\, , \\
 \overline{\gamma}_5=\left(\begin{array}{r}
-2\\
1
\end{array}\right)\, ,
 \overline{\gamma}_6=\left(\begin{array}{r}
1\\
1
\end{array}\right)\, ,
 \overline{\gamma}_7=\left(\begin{array}{r}
-1\\
-1
\end{array}\right)\, .
\end{split}
\end{equation}
For example, to find $\overline{\gamma}_2$, we take $\gamma_2$
(the second column in (\ref{TranspGammaCycle6})) and exclude
the coordinate $N_4$ by adding $2\gamma_1$. The result is a
vector $\gamma_2+2\gamma_1$. In coordinates $(N_2,N_6)$, this
vector gives us $\overline{\gamma}_2$.

\begin{figure}
\centering
\includegraphics[width=0.9\textwidth]{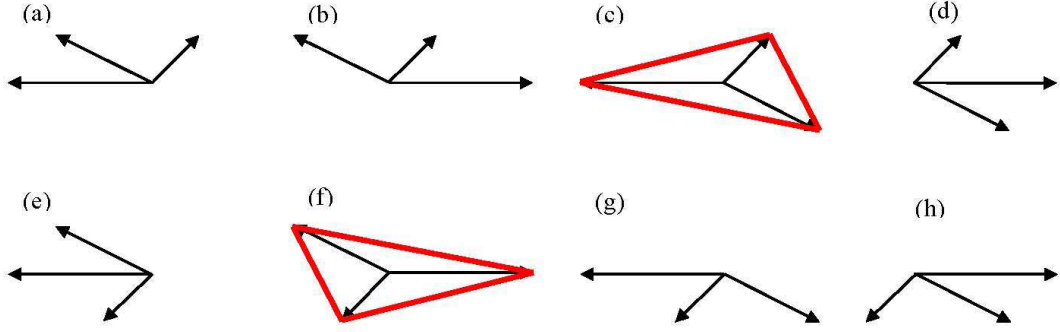}
\caption{\label{convex envelopes}Images of the stoichiometric vectors of
irreversible reactions  $\overline{\nu}_j=\pm \overline{\gamma}_j$ in $\mathbb{R}^3/S$
for various  combinations of
directions of reactions (\ref{directionsCH4}) in coordinates $N_2$ (abscissa), $N_6$. The configurations with $0 \in {\rm
conv}\{\overline{\nu}_2, \ldots , \overline{\nu}_7\}$ are outlined. Vectors $\overline{\nu}_2$,
$\overline{\nu}_3$ and $\overline{\nu}_4
$ coincide as well as vectors $\overline{\nu}_6$ and
$\overline{\nu}_7$.}
\end{figure}

The stoichiometric vectors of irreversible reactions are $+
\gamma_j $ or $- \gamma_j $ ($j=2,\ldots, 7$). Their images in
the quotient space $\mathbb{R}^3/S$ are $+ \overline{\gamma}_j
$ or $- \overline{\gamma}_j $. The extended form of detailed
balance requires that the convex envelope of these vectors
should not include zero. We have to arrange signs in $\pm
\gamma_j $ to provide this property. First of all,  we see
immediately from (\ref{quotationVec}) that the second and the
forth reaction should have the same directions and the third
reaction should have the opposite direction. The directions of
the sixth and the seventh reactions should be opposite.
Therefore, we have to analyze eight possible reaction
mechanisms. Let us represent them by the directions of
reactions:
\begin{equation}\label{directionsCH4}
\left[\begin{array}{c}
({\rm a})\\
\rightleftharpoons\\
\to \\
\leftarrow \\
\to \\
\to\\
\to\\
\leftarrow
\end{array}\right];
\left[\begin{array}{c}
({\rm b})\\
\rightleftharpoons\\
\leftarrow \\
\to \\
\leftarrow \\
\to\\
\to\\
\leftarrow
\end{array}\right];
\left[\begin{array}{c}
({\rm c})\\
\rightleftharpoons\\
\to \\
\leftarrow \\
\to \\
\leftarrow\\
\to\\
\leftarrow
\end{array}\right];
\left[\begin{array}{c}
({\rm d})\\
\rightleftharpoons\\
\leftarrow \\
\to \\
\leftarrow \\
\leftarrow\\
\to\\
\leftarrow
\end{array}\right];
\left[\begin{array}{c}
({\rm e})\\
\rightleftharpoons\\
\to \\
\leftarrow \\
\to \\
\to\\
\leftarrow\\
\to
\end{array}\right];
\left[\begin{array}{c}
({\rm f})\\
\rightleftharpoons\\
\leftarrow \\
\to \\
\leftarrow \\
\to\\
\leftarrow\\
\to
\end{array}\right];
\left[\begin{array}{c}
({\rm g})\\
\rightleftharpoons\\
\to \\
\leftarrow \\
\to \\
\leftarrow\\
\leftarrow\\
\to
\end{array}\right];
\left[\begin{array}{c}
({\rm h})\\
\rightleftharpoons\\
\leftarrow \\
\to \\
\leftarrow \\
\leftarrow\\
\leftarrow\\
\to
\end{array}\right].
\end{equation}
Arrows here correspond to the directions of reactions. For
example, the case (a) corresponds to the reaction mechanism
\begin{enumerate}
\item{${\rm CO}_2 + {\rm H_2} \rightleftharpoons  {\rm CO}
    + {\rm H}_2{\rm O}$;}
\item{${\rm CH}_4 + {\rm CO}_2 \to  {\rm 2CO} + {\rm
    2H_2}$; }
\item{${\rm CO}_2 + 4{\rm H}_2 \leftarrow  {\rm CH}_4 +
    2{\rm H}_2{\rm O}$;}
\item{${\rm CH}_4 + {\rm H}_2{\rm O}  \to {\rm CO} + 3{\rm
    H}_2$;}
\item{${\rm CH}_4 \to {\rm 2H}_2 + {\rm C}$;}
\item{$2{\rm CO}  \to {\rm CO}_2 + {\rm C}$;}
\item{${\rm C} + {\rm H}_2{\rm O}  \leftarrow {\rm CO} +
    {\rm H}_2$.}
\end{enumerate}
Combinations (c) and (f) contradict the condition 1 from
Theorem~\ref{Theo:ConstrCrit}: the origin belongs to the convex
envelope of the vectors $\overline{\nu}_j$ of irreversible
reactions (see Fig.~\ref{convex envelopes}). Hence, only six
combinations of directions of irreversible reactions satisfy
the extended form of detailed balance (from $2^6=64$ possible
combinations of directions): (a), (b), (d), (e), (g) and (h).

Let us extend the list of reversible reactions. If we assume that
the second reaction (dry reforming), is reversible together with the
first one (RWGS) then the third and the forth reactions should be
also reversible because $\gamma_3=2\gamma_1-\gamma_2$ and
$\gamma_4=\gamma_2-\gamma_1$, hence, $\gamma_{3,4}\in {\rm
span}\{\gamma_1,\gamma_2\}$. According to the condition 1 from
Theorem~\ref{Theo:ConstrCrit}, this contradicts to the extended form
of detailed balance if the first and the second reactions are
reversible and the third and the forth are not.

Analogously, in addition to the reversible reaction RWGS, the
{Boudouard equilibrium} 6 and coal gasification 7 can be reversible
only together because $\gamma_7=-\gamma_1-\gamma_6$.

We have to consider three possible sets of reversible
reactions:
\begin{enumerate}
\item{1, 2, 3 and 4;}
\item{1 and 5;}
\item{1, 6 and 7.}
\end{enumerate}

For all three cases,  $\dim S=2$ and $\dim (\mathbb{R}^3/S)=1$.
We will use for the quotient space the coordinate $N_6$ which
corresponds to $A_6={\rm C}$.

In the first case, let us exclude the coordinate $N_4$ from
$\overline{\gamma}_{5,6,7}$ (\ref{quotationVec})  using vector
$\overline{\gamma}_{2}$. We get one-dimensional vectors
$$\overline{\gamma}_{5}=1,\; \overline{\gamma}_{6}=1, \; \overline{\gamma}_{7}=-1\, .$$
To satisfy the extended form of detailed balance the directions of
the fifth and the sixth reaction should coincide and the direction
of the seventh reaction should be opposite: there are two possible
combinations of arrows in irreversible reactions 5, 6 and 7 if
reactions 1, 2, 3 and 4 are reversible: $5 \to,\,6
\to,\,7\leftarrow$ and $5\leftarrow, \,6\leftarrow, \,7\to$.

In the  second case, let us exclude the coordinate $N_4$ from
$\overline{\gamma}_{2,3,4,6,7}$ (\ref{quotationVec})  using
vector $\overline{\gamma}_{5}$. We get one-dimensional vectors
$$\overline{\gamma}_{2}=-2/3,\; \overline{\gamma}_{3}=2/3, \; \overline{\gamma}_{4}=-2/3, \;
\overline{\gamma}_{6}=1/2, \; \overline{\gamma}_{7}=-1/2\, .$$
Again, according to the extended form of detailed balance, here
are two possibilities of directions of irreversible reactions
2, 3, 4, 6 and 7 if reactions 1 and 5 are reversible: $2\to,
\,3\leftarrow, \,4\to, \,6\leftarrow, \,7\to$ and $2\leftarrow,
\,3\to, \,4\leftarrow, \,6\to, \,7\leftarrow$.

In the third case, let us exclude the coordinate $N_4$ from
$\overline{\gamma}_{2,3,4,5}$ (\ref{quotationVec})  using
vector $\overline{\gamma}_{6}$. We get one-dimensional vectors
$$\overline{\gamma}_{2}=3,\; \overline{\gamma}_{3}=-3, \;
\overline{\gamma}_{4}=3, \; \overline{\gamma}_{5}=3\, .$$ According
to the extended form of detailed balance, here are two possibilities
of directions of irreversible reactions 2, 3, 4,  and 5 if reactions
1, 6 and 7 are reversible: $2\to, \,3\leftarrow, \,4\to, \,5\to$ and
$2\leftarrow, \,3\to, \,4\leftarrow, \,5\leftarrow$.

In the first and the third cases, there are nontrivial Wegscheider
identities for the reaction equilibrium constants of reversible
reactions. If reactions 1, 2, 3 and 4 are reversible (case 1) then
$\dim \Lambda=2$ and the basis of $\Lambda$ is, for example,
$\boldsymbol{\lambda}^1=(2,-1,-1,0)$
($2\gamma_1-\gamma_2-\gamma_3=0$) and
$\boldsymbol{\lambda}^2=(1,-1,0,1)$
($\gamma_1-\gamma_2+\gamma_4=0$). The two correspondent Wegscheider
identities are: $K_1^2=K_2K_3$ and $K_1K_4=K_2$ (where
$K_i=k_i^+/k_i^-$).

If the reactions 1, 6 and 7 are reversible then $\dim \Lambda=1$ and
the basis of $\Lambda$ consists of one vector
$\boldsymbol{\lambda}=(1,1,1)$ ($\gamma_1+\gamma_6+\gamma_7=0$). The
correspondent Wegscheider identity is: $K_1K_6K_7=1$.

If we add one more reversible reaction in cases 1-3  then all
the reactions 1-7 should be reversible in according to the
extended form of detailed balance.

In this case study, we demonstrated also how it is possible to
organize computations and reduce the dimension of the
computational problems.

\section{Multiscale Degenerated Equilibria \label{Sec:Shifted}}

Let in a system of reversible reactions with detailed balance
some $k_s^- \to 0$, when the correspondent $k_s^+$ remains
constant and separated from zero. In this case, some
equilibrium activities also tend to zero. Indeed, at
equilibrium $w_s^+=w_s^-$, $w_s^- \to 0$ because $k_s^- \to 0$,
hence, $w_s^+\to 0$ and some of $a_i^{\rm eq}$ with
$\alpha_{si}>0$ also tend to zero due to the generalized mass
action law (\ref{GenMAL}). Therefore, the irreversible limits
of the reactions with detailed balance are closely related to
the limits when some equilibrium activities tend to zero. (For
the usual mass action law is sufficient to replace the
``activity $a_i$" by the ``concentration $c_i$".)

In this section we study asymptotics $a_i^{\rm eq} = {\rm
const}\times \varepsilon^{\delta_i}$, $\varepsilon \to 0$ for
various values of non-negative exponents $\delta_i \geq 0$
($i=1,\ldots,n$).

There exists a well known way to satisfy the principle of
detailed balance: just write $k^-_r=k^+_r/ K_r$ where $K_r$ is
the equilibrium constant:
$$K_r=\frac{\prod_{i=1}^n (a_i^{\rm eq})^{\beta_{ri}}}
{\prod_{i=1}^n (a_i^{\rm eq})^{\alpha_{ri}}}\, .$$
 We can define the equilibrium constant through the equilibrium
 thermodynamics as well (see, for example,
 the classical book by \cite{PrigogineDefay1962}). In this case, the principle of
 detailed balance is also satisfied for the mass action law.

In this approach, we have to group the direct and reverse
reactions together. Therefore, $m$ is here the number of pairs
of reactions, direct + inverse ones. We  deal with $m+n$
constants ($m$ rate constants $k_r^+$ for direct reactions and
$n$ equilibrium data for individual reagents: equilibrium
concentrations or activities) instead of $2m$ constants
$k_r^{\pm}$. For these $m+n$ constants, the principle of
detailed balance produces no restrictions
\citep{GorbanMirFGV1989,YangHlavacek2006}. It holds ``by the
construction" for any positive values of these constants if
$k^-_r=k^+_r/ K_r$ and the equilibrium constants are calculated
in accordance with the equilibrium data.

To transform the conditions of $a_i^{\rm eq} \to 0 $ into
irreversibility of some reactions, it is not sufficient to know
which $a_i^{\rm eq} \to 0 $. We have to take into account the
rates of these convergence to zero for different $i$. In the
simple example, $A_1 \rightleftharpoons A_2 \rightleftharpoons
A_3 \rightleftharpoons A_1$, if $a_{1,2}^{\rm eq} \to 0 $, $a_1
/ a_2 \to 0$ then in the limit we get the system $A_1 \to A_2$
(because the $A_1/A_2$ equilibrium is shifted to $A_2$), $A_1
\to A_3$, $A_2\to A_3$. For the inverse relations between $a_1$
and $a_2$, $a_2 / a_1 \to 0$, the limit system is $A_2 \to A_1$
(the $A_1/A_2$ equilibrium is shifted to $A_1$), $A_1 \to A_3$,
$A_2\to A_3$. For the both limit systems, the equilibrium
activities of $A_1$, $A_2$ are zero but the directions of
reaction are different.

The limit structure of the reaction mechanism when some of
$a_i^{\rm eq} \to 0 $ depends on the behavior of the ratios
$a_i^{\rm eq}/a_i^{\rm eq}$. To formalize this dependence, let
us introduce a parameter  $\varepsilon > 0$ and take $a_i^{\rm
eq} = {\rm const}\times \varepsilon^{\delta_i}$. At
equilibrium, each monomial in the generalized mass action law
is proportional to a power of $\varepsilon$:
$$w_r^{{\rm
eq}+}= k_r^+ {\rm const} \times \varepsilon^{\sum_i \alpha_{ri}
\delta_i}\, , \;\; w_r^{{\rm
eq}-}=k_r^- {\rm const} \times \varepsilon^{\sum_i \beta_{ri}
\delta_i}\, .$$

The principle of detailed balance gives: $w_r^{{\rm
eq}+}=w_r^{{\rm eq}-}$. Therefore,
\begin{equation}\label{asymptk+/k-}
\frac{k_r^+}{k_r^-}={\rm const} \times \varepsilon^{(\gamma_r, \delta)}\, ,
\end{equation}
where $\delta$ is the vector with coordinates $\delta_i$.

There are three possibilities for the reversibility of an
elementary reaction in asymptotic $\varepsilon \to 0$:
\begin{enumerate}
\item{If $(\gamma_r, \delta)=0$ then the reaction remains
    reversible in asymptotic $\varepsilon \to 0$. This
    means that $0<\lim (k_s^+/k_s^-)<\infty$. Therefore, if
    one of the reactions persists in the limit then the
    reverse reaction also persists.}
\item{If $(\gamma_r, \delta)<0$ then in asymptotic
    $\varepsilon \to 0$ can remain only direct reaction.
    This means that $\lim (k_s^{ -}/k_s^{ +})=0$.}
\item{If $(\gamma_r, \delta)>0$ then in asymptotic
    $\varepsilon \to 0$ can remain only reverse reaction.
    This means that $\lim (k_s^{+}/k_s^{-})=0$.}
\end{enumerate}

It is possible that $(\gamma_r, \delta)=0$ but both $k_r^{{\rm
lim}\pm}=0$ just because $k^+_r=0$ and $k^-_r=0$ and not
because of the equilibrium degeneration.  If we delete some
irreversible reactions or several pairs of mutually reverse
reaction then the extended form of detailed balance persists.
Therefore, we do not consider these cases separately and always
discuss the limit reaction mechanisms with $\max\{k_r^{{\rm
lim}+},k_r^{{\rm lim}-}\}>0$.

For each stoichiometric vector $\gamma_r$ the $n$-dimensional
space of vectors $\delta$ is split in three sets: hyperplane
$(\gamma_r, \delta)=0$ (reaction remains reversible), hemispace
$(\gamma_r, \delta)<0$ (only direct reaction remains) and
hemispace $(\gamma_r, \delta)>0$ (only reverse reaction
remains). For the reaction mechanism, intersections of these
sets for all $\gamma_r$ ($r=1,\ldots, m$) form a tiling of the
n-dimensional space of vectors $\delta$. The intersection of
all hyperplanes $(\gamma_r, \delta)=0$ corresponds to the
initial reversible reaction mechanism. Other sets from this
tiling correspond to the reaction mechanisms that are limits of
the initial reaction mechanism when some of the reaction rate
constants tend to zero but the principle of detailed balance is
valid. In our study, the exponents $\delta_j$ should be
non-negative, hence, we have to study the tiling of the
positive orthant $\delta_j \geq 0$ in $\mathbb{R}^n$
Description of the tiling produced by a system of hyperplanes
$(\gamma_r, \delta)=0$ is a classical problem of combinatorial
geometry.

In the usual linear triangle $A_1 \rightleftharpoons A_2
\rightleftharpoons A_3 \rightleftharpoons A_1$ we have to
consider three hyperplanes in the space of exponents
$\delta=(\delta_1,\delta_2,\delta_3)$: $\delta_1=\delta_2$
($(\gamma_1,\delta)=0$), $\delta_2=\delta_3$
($(\gamma_2,\delta)=0$) and $\delta_3=\delta_1$
($(\gamma_3,\delta)=0$). At least one of the exponents should
take zero value to keep the overall concentration in
equilibrium neither zero nor infinite. Let us take
$\delta_1=0$. The hyperplanes turn in the straight lines on the
plane $(\delta_2,\delta_3)$: $0=\delta_2$
($(\gamma_1,\delta)=0$), $\delta_2=\delta_3$
($(\gamma_2,\delta)=0$) and $\delta_3=0$
($(\gamma_3,\delta)=0$). The positive octant on the plane
$(\delta_2,\delta_3)$ is split in five sets (A)--(E), that
correspond to the limits with some irreversible reactions, and
the origin:
\begin{itemize}
\item{(A) $\delta_2=0$, $\delta_3>0$, $A_1
    \rightleftharpoons A_2$, $A_3\to A_1$, $A_3 \to A_2$;}
\item{(B) $\delta_3>\delta_2>0$, $A_3 \to A_2 \to A_1$,
    $A_3 \to A_1$;}
\item{(C) $\delta_3=\delta_2>0$, $A_3 \rightleftharpoons
    A_2$, $A_2 \to A_1$, $A_3 \to A_1$;}
\item{(D) $\delta_2>\delta_3>0$, $A_2 \to A_3 \to A_1$,
    $A_2 \to A_1$ (this case differs from (B) by the
    transposition $2\leftrightarrow 3$);}
\item{(E) $\delta_2>0$, $\delta_3=0$ $A_1
    \rightleftharpoons A_3$ , $A_2\to A_1$, $A_2 \to A_3$
    (this case differs from (A) by the transposition
    $2\leftrightarrow 3$).}
\item{The origin corresponds to the fully reversible
    mechanism.}
\end{itemize}

For a less trivial example, let us analyze the reaction mechanism
from Section~\ref{Sec:NonlinExamples}: $A_1 \rightleftharpoons A_2
\rightleftharpoons A_3 \rightleftharpoons A_1$, $A_1+A_2
\rightleftharpoons 2A_3$. This is a reversible cycle supplemented by
a nonlinear step.

We join the direct and reverse elementary reactions and,
therefore,
\begin{equation}\label{TranspGammaCycle3}
\boldsymbol{\Gamma}^{\mathrm{T}}= \left[
\begin{array}{rrrr}
 -1 & 0 & 1 &-1   \\
 1  &-1 & 0 &-1   \\
 0  & 1 &-1 & 2
\end{array}
\right]
\end{equation}
The columns of this matrix are the stoichiometric vectors
$\gamma_r$.

Let us study the tiling of the positive orthant in
$\mathbb{R}^3$ by the planes $(\gamma_r,\delta)=0$
($r=1,\ldots,4$). First of all, it is necessary and sufficient
to study this tiling of the positive octants in three planes:
$\delta_1=0$, or $\delta_2=0$, or $\delta_3=0$ because at least
one equilibrium concentration should not tend to zero and,
therefore, has zero exponent. The symmetry between $A_1$ and
$A_2$ allows us to study two planes: $\delta_1=0$ or
$\delta_3=0$.

\begin{figure}
\centering
\includegraphics[width=0.5\textwidth]{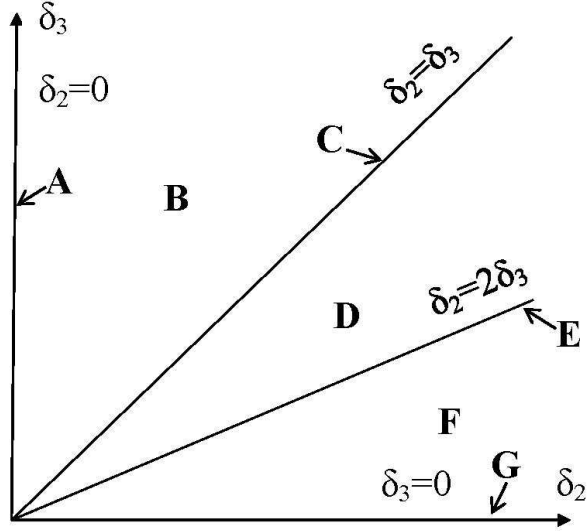}
\caption{\label{Tiling} Tiling of the positive octant of the plane $(\delta_2,\delta_3)$ ($\delta_1=1$)
that corresponds to seven irreversible limits of the reaction mechanism.}
\end{figure}

On the plane $\delta_1=0$ with coordinates $\delta_2$,
$\delta_3$ we have four straight lines: $(\gamma_1,\delta)=0$
($\delta_2=0$), $(\gamma_2,\delta)=0$, ($\delta_2=\delta_3$),
$(\gamma_3,\delta)=0$ ($\delta_3=0$) and $(\gamma_4,\delta)=0$
($\delta_2=2\delta_3$). These lines divide the positive octant
($\delta_{2,3}\geq 0$) into seven parts (Fig.~\ref{Tiling}) and
the origin:
\begin{enumerate}
\item{ (A) $\delta_2=0$, $\delta_3>0$, $A_1
    \rightleftharpoons A_2$, $A_3 \to A_1$, $A_3\to A_2$,
    $2A_3 \to A_1+A_2$;}
\item{(B) $\delta_2>0$, $\delta_3>\delta_2$, $A_2 \to A_1$,
    $A_3 \to A_1$, $A_3\to A_2$, $2A_3 \to
A_1+A_2$;}
\item{(C) $\delta_2=\delta_3>0$, $A_2 \to A_1$, $A_3 \to
    A_1$, $A_3\rightleftharpoons A_2$, $2A_3 \to A_1+A_2$;}
\item{(D) $0<\delta_3< \delta_2<2\delta_3$, $A_2 \to A_1$,
    $A_3 \to A_1$,  $A_2\to A_3$, $2A_3 \to A_1+A_2$;}
\item{(E) $0< \delta_2=2\delta_3$, $A_2 \to A_1$, $A_3 \to
    A_1$,  $A_2\to A_3$, $2A_3 \rightleftharpoons
    A_1+A_2$;}
\item{(F) $\delta_2>2\delta_3>0$, $A_2 \to A_1$, $A_3 \to
    A_1$,  $A_2\to A_3$, $A_1+A_2 \to 2A_3 $;}
\item{(G)  $\delta_3=0$, $\delta_2>0$, $A_2 \to A_1$, $A_1
    \rightleftharpoons A_3$,  $A_2\to A_3$, $A_1+A_2 \to
    2A_3 $;}
\item{The origin corresponds to the fully reversible
    mechanism.}
\end{enumerate}

The same picture gives us the plane $\delta_2=0$ with
coordinates $\delta_1$, $\delta_3$: we need just to transpose
the indexes, $1 \leftrightarrow 2$.

On the plane $\delta_3=0$ with coordinates $\delta_1$,
$\delta_2$ the positive octant is divided into five parts and
the origin:
\begin{enumerate}
\item{$\delta_1=0$, $\delta_2>0$, $A_2 \to A_1$, $A_1
    \rightleftharpoons A_3$, $A_2 \to A_3$, $A_1+A_2 \to
    2A_3$ (this is exactly the case (G) from
    Fig.~\ref{Tiling});}
\item{$0<\delta_1<\delta_2$, $A_2 \to A_1$, $A_1 \to
    A_3$, $A_2 \to A_3$, $A_1+A_2 \to 2A_3$;}
\item{$0<\delta_1=\delta_2$, $A_1 \rightleftharpoons A_2$,
    $A_1 \to A_3$, $A_2 \to A_3$, $A_1+A_2 \to 2A_3$;}
\item{$\delta_1>\delta_2>0$, $A_1 \to A_2$, $A_1 \to A_3$,
    $A_2 \to A_3$, $A_1+A_2 \to 2A_3$;}
\item{$\delta_2=0$, $\delta_1>0$, $A_1 \to A_2$, $A_1 \to
    A_3$, $A_2 \rightleftharpoons A_3$, $A_1+A_2 \to
    2A_3$;}
\item{The origin corresponds to the fully reversible
    mechanism.}
\end{enumerate}

This approach is equivalent to the previous definition of the
extended form of detailed balance based on the pathway
analysis. Indeed, if the reaction mechanism with some
irreversible reactions is a limit of the reversible mechanism
with detailed balance then it satisfies the conditions of the
extended form of detailed balance. (This is the direct
statement of Theorem~\ref{Theo:Equival} proved in
Section~\ref{Sec:Crit}.) To prove the converse statement, we
have to take a system that satisfies the extended form of
detailed balance and to find such a set of exponents $\delta_i
\geq 0$ ($i=1, \ldots, n$) that the system appears in the limit
of a reversible system with detailed balance when
$\varepsilon\to 0$ and $a_i^{\rm eq}={\rm const} \times
\varepsilon^{\delta_i}$.

Let a system with some irreversible reactions satisfy the
extended form of detailed balance. We follow the notations of
Theorem~\ref{Theo:ConstrCrit}: $\gamma_{j}$ ($j=1,\ldots, r$)
are the stoichiometric vectors of the reversible reactions and
$\nu_1, \ldots , \nu_s$ are the stoichiometric vectors of the
irreversible reactions. The linear subspace $S={\rm
span}\{\gamma_1, \ldots , \gamma_r\} \subset \mathbb{R}^n$
consists of all linear combinations of the stoichiometric
vectors of the reversible reactions.  We use notation
$\overline{\nu}_j$ for the images of $\nu_j$ in
$\mathbb{R}^n/S$.

Let $k_j^{\pm}>0$ ($j=1,\ldots , r$)  be the reaction rate
constants for the reversible reactions and $q_j=q_j^{+}>0$
($j=1,\ldots , s$) be the reaction rate constants for the
irreversible reactions. We extend the system by adding the
reverse reactions with the constants $q_j^{-}>0$. If the
extended system satisfies the principle of detailed balance
then
\begin{equation}\label{DetBalExtendExpon}
\frac{k_j^+}{k_j^-}= \prod_{i=1}^n (a_i^{\rm eq})^{\gamma_{ri}}
\;\; \mbox{ and } \; \; \frac{q_j^+}{q_j^-}= \prod_{i=1}^n
(a_i^{\rm eq})^{\nu_{ri}}\, ,
\end{equation}
 where $a_i^{\rm eq}$ is a point of detailed balance.

\begin{theorem}\label{Theo:Exponents+Weak}Let the system satisfy the extended form of detailed
balance. Then there exists a vector of nonnegative exponents
$\delta=(\delta_i)$ ($i=1, \ldots, n$) and the family of
extended systems with equilibria $a_i^{\rm eq}=a_i^*
\varepsilon^{\delta_i}$ such that condition
(\ref{DetBalExtendExpon}) hold, $k_j^{\pm}$ ($j=1,\ldots,r$)
and $q_j=q_j^+$ ($j=1,\ldots, s$) do not depend on
$\varepsilon$, and $q_j^- \to 0$ when $\varepsilon \to 0$.
\end{theorem}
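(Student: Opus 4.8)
The plan is to realize the target mechanism as the $\varepsilon\to 0$ limit of the reversible family obtained by setting $a_i^{\rm eq}=a_i^*\varepsilon^{\delta_i}$ for a suitably chosen vector of exponents $\delta=(\delta_i)$. By (\ref{DetBalExtendExpon}) and (\ref{asymptk+/k-}), under this substitution each ratio scales as a power of $\varepsilon$: for a reversible reaction $k_j^+/k_j^-=\text{const}\times\varepsilon^{(\gamma_j,\delta)}$ and for an irreversible one $q_j^+/q_j^-=\text{const}\times\varepsilon^{(\nu_j,\delta)}$. I therefore need $\delta$ to satisfy three requirements at once: (i) $\delta_i\geq 0$ for all $i$, so the equilibrium activities stay finite; (ii) $(\gamma_j,\delta)=0$ for every reversible reaction, so the reversible ratios (and hence the prescribed $k_j^\pm$) are independent of $\varepsilon$; and (iii) $(\nu_j,\delta)<0$ for every irreversible reaction, so $q_j^+/q_j^-\to\infty$, i.e. $q_j^-\to 0$ while $q_j^+$ is fixed. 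Condition (ii) says exactly that $\delta$ lies in $S^\perp$.

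The existence of a $\delta$ meeting (ii) and (iii) is supplied by the structural hypothesis. Since the system satisfies the extended form of detailed balance, its structural condition $0\notin\text{conv}\{\overline\nu_1,\ldots,\overline\nu_s\}$ holds, and by Remark~\ref{Rem:Separation} there is a linear functional $l$ on $\mathbb{R}^n$ with $l(\nu_j)>0$ for all $j$ and $l(\gamma_j)=0$ for all reversible $j$. Writing $l(y)=(y,\mu)$ and putting $\delta^{(0)}=-\mu$, I get $(\gamma_j,\delta^{(0)})=0$ and $(\nu_j,\delta^{(0)})<0$, which is (ii) and (iii). What this functional does not guarantee is the sign condition (i): nothing prevents some coordinates of $\delta^{(0)}$ from being negative.

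Reconciling the separation property with the nonnegativity (i) is the one genuinely delicate point, and I would resolve it with a positive conservation law. A closed chemical system carries a strictly positive stoichiometric conservation law $b$, $b_i>0$, with $(\gamma_j,b)=(\nu_j,b)=0$ for every reaction. Since $b\in S^\perp$ and $b$ is orthogonal to every $\nu_j$, the shifted vector $\delta=\delta^{(0)}+Cb$ still satisfies $(\gamma_j,\delta)=0$ and $(\nu_j,\delta)=(\nu_j,\delta^{(0)})<0$ for all reactions, while $\delta_i=\delta^{(0)}_i+Cb_i\to+\infty$ as $C\to\infty$. Choosing $C$ large enough makes every $\delta_i$ positive, so (i)--(iii) hold simultaneously.

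It remains to assemble the family and check that it is reversible with detailed balance for each $\varepsilon>0$. The algebraic condition of the extended form of detailed balance is precisely solvability of (\ref{DetBalGen}) for the reversible part, so by Corollary~\ref{Corr:1} there is a point $a^*>0$ with $k_j^+/k_j^-=\prod_i(a_i^*)^{\gamma_{ji}}$ for all reversible $j$. Fixing the prescribed $k_j^\pm$ and $q_j=q_j^+$, I set $a_i^{\rm eq}=a_i^*\varepsilon^{\delta_i}$ and define $q_j^-=q_j^+\big/\prod_i(a_i^{\rm eq})^{\nu_{ji}}$. Then (\ref{DetBalExtendExpon}) holds by construction, so the extended system obeys detailed balance for every $\varepsilon>0$; because $(\gamma_j,\delta)=0$ the ratios $k_j^+/k_j^-$ --- and hence the fixed $k_j^\pm$ --- do not depend on $\varepsilon$; and because $(\nu_j,\delta)<0$ one gets $q_j^-=\text{const}\times\varepsilon^{-(\nu_j,\delta)}\to 0$. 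This yields the required family and, in passing, completes the ``only if'' half of Theorem~\ref{Theo:Equival}.
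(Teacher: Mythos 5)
Your argument is correct and follows essentially the same route as the paper's own proof: reduce the problem to the linear system $(\gamma_j,\delta)=0$, $(\nu_j,\delta)<0$, obtain a solution from the separating functional guaranteed by the structural condition (Remark~\ref{Rem:Separation}), and then shift by a positive stoichiometric conservation law $b$ to make the exponents nonnegative, with the positive equilibrium $a^*$ of the reversible part supplied by the algebraic condition. The only difference is that you spell out a few steps the paper leaves implicit (the explicit construction $\delta^{(0)}=-\mu$ and the formula $q_j^-=q_j^+/\prod_i(a_i^{\rm eq})^{\nu_{ji}}$), which does not change the substance.
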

\begin{proof}
If the system satisfies the extended form of detailed balance
then the reversible part satisfies the principle of detailed
balance and, hence, there exists a positive point of detailed
balance for the reversible part of the system
(Theorem~\ref{Theo:ConstrCrit}): $a_i^*>0$ and
$$k_j^+\prod_{i=1}^n (a_i^*)^{\alpha_{ri}}=k_j^+\prod_{i=1}^n
(a_i^*)^{\beta_{ri}}\, .$$
 Let us take $a_i^{\rm eq}=a_i^*\varepsilon^{\delta_i}$.
Due to (\ref{DetBalExtendExpon}), $k_j^+/k_j^-={\rm const}
\times \varepsilon^{(\gamma_j,\delta)}$. To keep the
$k_i^{\pm}$ independent of $\varepsilon$, we have to provide
$(\gamma_j,\delta)=0$. Analogously, $q_j^+/q_j^-={\rm
const}\times  \varepsilon^{(\nu_j,\delta)}$. The rate constant
$q_j^+$ should not depend on $\varepsilon$ and $q_j^- \to 0$
when $\varepsilon \to 0$. Therefore, $(\nu_j,\delta)<0$. We
came to the system of linear equations and inequalities with
respect to exponents $\delta_i$:
\begin{equation}\label{systemForExponents}
(\gamma_j,\delta)=0 \; (j=1, \ldots , r), \;\; (\nu_j,\delta)<0\; (j=1, \ldots , s) \,.
\end{equation}
The solvability of this system is equivalent to the condition 1
of Theorem~\ref{Theo:ConstrCrit} (see
Remark~\ref{Rem:Separation}). To prove the existence of
nonnegative exponents $\delta_i\geq 0$, we have to use
existence of positive conservation law: $b_i>0$,
$(\gamma_j,b)=0$, $(\nu_j,b)=0$. For every solution $\delta$ of
(\ref{systemForExponents}) and any number $d$, the vector
$\delta+db$ is also a solution of (\ref{systemForExponents}).
Therefore, the nonnegative solution exists. We proved the
theorem and the converse statement of
Theorem~\ref{Theo:Equival}.
\end{proof}

\begin{proposition}
Let a system with the stoichiometric vectors $\gamma_s$ and the
extended detailed balance be obtained from the reversible
systems with detailed balance in the limit $a_i^{\rm eq} = {\rm
const}\times \varepsilon^{\delta_i}$, $\varepsilon \to 0$. For
this system, the linear function $(\delta, c)$ of the
concentrations $c$ monotonically decreases in time due to the
kinetic equations $\frac{\D c}{\D t}=\sum_s w_s \gamma_s$.
\end{proposition}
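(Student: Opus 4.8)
The plan is to exhibit $(\delta,c)$ as an immediate Lyapunov-type functional, in direct analogy with the monotone linear functional constructed for acyclic systems in Remark~\ref{Rem:Acycl}. The starting point is the trichotomy established at the beginning of Section~\ref{Sec:Shifted}: when $a_i^{\rm eq}={\rm const}\times\varepsilon^{\delta_i}$ and $\varepsilon\to 0$, a reaction with stoichiometric vector $\gamma_r$ persists in the limit as reversible when $(\gamma_r,\delta)=0$, as its direct reaction (stoichiometric vector $\gamma_r$) when $(\gamma_r,\delta)<0$, and as its reverse reaction (stoichiometric vector $-\gamma_r$) when $(\gamma_r,\delta)>0$. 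All of the work of the proof is already contained in this sign analysis.

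First I would record the key orientation fact: if $\gamma_s$ denotes the stoichiometric vector of \emph{any} reaction that survives in the limit mechanism, oriented as it actually occurs, then $(\gamma_s,\delta)\le 0$. Indeed, in the reversible case $(\gamma_s,\delta)=0$; in the two irreversible cases the surviving direction is precisely the one whose inner product with $\delta$ is negative, because the reverse reaction carries the vector $-\gamma_r$ and $(-\gamma_r,\delta)=-(\gamma_r,\delta)<0$ exactly when $(\gamma_r,\delta)>0$. Hence every surviving reaction is oriented so that its stoichiometric vector has non-positive projection onto $\delta$.

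Next I would differentiate $(\delta,c)$ along the trajectory. Using the kinetic equation and the linearity of the inner product,
\begin{equation}
\frac{\D}{\D t}(\delta,c)=\left(\delta,\frac{\D c}{\D t}\right)=\sum_s w_s(\gamma_s,\delta)\, .
\end{equation}
Since the reaction rates satisfy $w_s\ge 0$ and every surviving $\gamma_s$ satisfies $(\gamma_s,\delta)\le 0$, each summand is non-positive, so $\frac{\D}{\D t}(\delta,c)\le 0$ and $(\delta,c)$ decreases monotonically. Paralleling Remark~\ref{Rem:Acycl}, I would also note that the decrease is strict whenever at least one irreversible reaction (one with $(\gamma_s,\delta)<0$) proceeds at a positive rate $w_s>0$, while reversible reactions contribute zero.

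I do not anticipate a genuine analytic obstacle: once the orientation fact is in place, the conclusion follows from a single differentiation. The only point demanding care is the bookkeeping of directions, namely ensuring that each surviving reaction is indexed by the correctly signed stoichiometric vector ($\gamma_r$ versus $-\gamma_r$) so that $(\gamma_s,\delta)$ comes out non-positive in every case. This is exactly the role played by the trichotomy of Section~\ref{Sec:Shifted}, and it is where I would concentrate the written argument.
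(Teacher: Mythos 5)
Your proposal is correct and follows essentially the same route as the paper: differentiate $(\delta,c)$ along the flow, observe that reversible limit reactions contribute nothing because $(\gamma_s,\delta)=0$, and that surviving irreversible reactions, oriented as they actually occur, satisfy $w_s\geq 0$ and $(\gamma_s,\delta)<0$. The paper phrases the reversible case via a net rate of indefinite sign killed by the vanishing inner product, while you keep direct and reverse rates separate, but this is only a bookkeeping difference.
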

\begin{proof}
Indeed, $\frac{\D (\delta, c)}{\D t}=\sum_s w_s
(\gamma_s,\delta)$ (compare to Remark~\ref{Rem:Acycl}). For the
reversible reactions, the sign of $w_s$ is indefinite but
$(\gamma_s,\delta)=0$. For the irreversible reactions, we
always can take $w_s=w_s^+\geq 0$ just by the selection of
notations. In this case,  only $k^+_s$ survived in the limit
$\varepsilon \to 0$, this means that $(\gamma_s,\delta) <0$.
Therefore, $\frac{\D (\delta, c)}{\D t}\leq 0$ and it is zero
if and only if all the reaction rates of the irreversible
reactions vanish.
\end{proof}

So, the vector of exponents $\delta$ defines the (partially)
irreversible limit of the reaction mechanism and, at the same
time, gives the explicit construction of the special Lyapunov
function for the kinetic equations of the limit system.

In this Section, we developed the approach to the systems with
some irreversible reactions based on multiscale degeneration of
equilibria, when some $a_i \to 0$ as $\varepsilon^{\delta_i}$.
We proved in Theorem~\ref{Theo:Exponents+Weak} that this
approach is equivalent to the extended form of detailed balance
based on the pathways analysis or on the limits of the systems
with detailed balance when some of the reaction rate constants
tend to zero.

\section{Conclusion}

The classical principle of detailed balance operates with
mechanisms, which consist of fully reversible elementary
processes (reactions). If such mechanisms have cycles of
reactions, each cycle is characterized by one Wegscheider
relationship (\ref{WegscheiderLambda}) between its rate
constants. The number of functionally independent relationships
is equal to the number of linearly independent cycles, linear
or nonlinear.

In difference from this classical case, we analyzed mechanisms,
which may include irreversible reactions as well. For such
mechanisms we proved an {\em extended form of detailed
balance} considering the irreversible reactions as limits of
reversible steps, when the rate constants of the corresponding
reverse reactions approach zero. The novelty of this form is
that the extended detailed balance now is presented as a
necessary combination of two constituents:
\begin{itemize}
\item{Structural conditions in accordance to which the
    irreversible reactions cannot be included in oriented
    cyclic pathways.}
\item{Algebraic conditions which are written for the
    ``reversible part" of the complex mechanism taken
    separately, without irreversible reactions, using the
    classical Wegscheider relationships.}
\end{itemize}

The computational tools combine linear algebra (some standard
tools for chemical kinetics) with methods of linear
programming. The most expensive computational problem appears
when we check the structural condition of the extended form of
detailed balance.

Let $n$ be the number of components, and let $\mathbb{R}^n$ be
the composition space. We consider a system with $r$ reversible
and $s$ irreversible reactions. Let us use $\gamma_1, \ldots ,
\gamma_r$ for the stoichiometric vectors of the reversible
reactions, $\nu_1, \ldots , \nu_s$ for the stoichiometric
vectors of the irreversible reactions and $\overline{\nu}_j$
for the images of $\nu_j$ in the quotient space
$\mathbb{R}^n/S$, where $S$ is spanned by the stoichiometric
vectors of all  reversible reaction, $S={\rm span}\{\gamma_1,
\ldots , \gamma_r\} \subset \mathbb{R}^n$. The reaction
mechanism satisfies the structural condition of the extended
form of detailed balance if and only if
$$0 \notin {\rm conv}\{\overline{\nu}_1, \ldots ,
\overline{\nu}_s\}\, .$$

We have to check whether the origin belongs to the convex hull
of the vectors $\overline{\nu}_1, \ldots , \overline{\nu}_s$.
In practice, we can always assume that these vectors have
exactly known rational (or even integer) coordinates.

We combined three approaches to study the restrictions implied by
the principle of detailed balance in the systems with some
irreversible reactions:
\begin{enumerate}
\item{Analysis of limits of the systems with all reversible reactions and
detailed balance when some of the reaction rate constants tend to
zero.}
\item{Analysis of the Wegscheider identities for elementary
    pathways when some of the reaction rate constants turn
    into zero.}
\item{Analysis of limits of the systems when some equilibrium concentrations
(or, more general, activities) tend to zero.}
\end{enumerate}

We proved that these three approaches are equivalent if we take
into account not only which equilibrium concentrations tend to
zero, but the speed of this tendency as well. The various
partially or fully irreversible limits of the reaction
mechanisms are, in this sense, multiscale asymptotics of the
reaction networks when some equilibrium concentration tend to
zero with different speed.

\end{document}